\titleformat{\subsubsection}[runin]{\normalfont\bfseries}{\thesubsubsection}{1em}{}
\newtheorem{theorem}{Theorem}[section]
\newtheorem{definition}{Definition}
\newtheorem{lemma}{Lemma}
\newtheorem{remark}{Remark}
\newtheorem{example}{Example}
\newtheorem{proposition}[theorem]{Proposition}  
\newcounter{para}[section]
\newcommand{\oprocendsymbol}{\hbox{$\bullet$}}
\newcommand{\oprocend}{\relax\ifmmode\else\unskip\hfill\fi\oprocendsymbol}
\newcommand{\K}{\mathcal{K}}
\newcommand{\GL}{\operatorname{GL}}
\newcommand{\eig}{\operatorname{eig}}
\newcommand{\R}{\mathbb{R}}
\newcommand{\V}{\mathbb{V}}
\renewcommand{\cal}{\mathcal}
\newcommand{\dep}{\operatorname{dep}}
\definecolor{BBlue}{cmyk}{.98,0.10,0,.25}
\begin{document}

\title{Sparse Linear Ensemble Systems and \\ Structural Controllability}
\date{}
\maketitle

\vspace{-2cm}
\begin{flushright}
{\small {Xudong Chen\footnote[1]{X. Chen is with the ECEE Dept., CU Boulder.  Email: \texttt{xudong.chen@colorado.edu}.} 
}}
\end{flushright}

\begin{abstract} 
The paper introduces and solves a structural controllability problem for continuum ensembles of linear time-invariant systems. 
All the individual linear systems of an ensemble are sparse, governed by the same sparsity pattern. 
Controllability of an ensemble system is, by convention, the capability of using a common control input to simultaneously steer every individual systems in it.     
A sparsity pattern is structurally controllable if it admits a controllable linear ensemble system. A main contribution of the paper is to provide a graphical condition that is necessary and sufficient for a sparsity pattern to be structurally controllable. Like other structural problems, the property of being structural controllable is monotone. We provide a complete characterization of minimal sparsity patterns as well.       
\end{abstract}

\section{Introduction}\label{sec:introduction}
In the paper,  we introduce and solve a structural controllability problem for continuum ensembles of linear time-invariant systems. A brief description of the problem is given below. Motivations for studying the problem are given after. 

Let $\Sigma$ be a closed interval in $\R$. Consider a linear ensemble system parameterized by a variable $\sigma\in \Sigma$ as follows:
\begin{equation}\label{eq:begin}
	\dot x(t, \sigma) := \frac{\partial}{\partial t} x(t, \sigma) = A(\sigma) x(t,\sigma) + B(\sigma) u(t), 
\end{equation}
where $A:\Sigma\to \R^{n\times n}$ and $B:\Sigma\to \R^{n\times m}$ are matrix-valued functions on the interval~$\Sigma$, $x(t, \sigma)\in \R^n$ is the state of the individual system indexed by $\sigma$ at time $t$, and $u(t)\in \R^m$ is a common control input that applies to all individual systems. The $(A, B)$ pair considered here is compliant with a certain sparsity pattern, i.e., certain entries of $A$ and $B$ are zero functions. The interval $\Sigma$ is commonly referred to as the {\em parameterization space}.  
Controllability of the linear ensemble system~\eqref{eq:begin} is, roughly speaking, the capability of using the common control input $u(t)$ to simultaneously steer every individual linear control system in it. Instead of investigating controllability of a particular sparse pair $(A, B)$, we characterize sparsity patterns that admit controllable pairs. A precise problem formulation will be given in Section~\ref{sec:formulationandresult}.

The above ensemble control problem has connections with the problem of controlling a large population of recurring small networks in a much larger complex system. Recurring patterns  with significantly high frequencies of appearances in a large-scale complex system are known as motifs~\cite{milo2002network} and they are ubiquitous in nature. In many cases, steering of such complex system is often achieved by ``broadcasting'' control inputs to manipulate the network motifs.   
Notable examples include social networks where families or companies are influenced by advertisements or government policies, biological networks where gene regulatory motifs respond to external stimuli, and quantum ensembles where coupled nuclear spins are manipulated by radio-frequency pulses. 

The importance of the structures of motifs is in the belief that these structures are essential for certain functions to be achieved. The function of our interest in this paper is a fundamental one in control theory, namely, controllability.

When it comes to engineering, the framework of controlling an ensemble of relatively small-sized networks complements existing methods for controlling large-scale multi-agent systems. Many existing methods rely on the use of leader-follower hierarchies~\cite{baillieul2003information,rahmani2009controllability,pasqualetti2014controllability,chen2019controlling}; specifically, the controller steers the network by controlling only a few leading agents and, meanwhile, let the followers obey certain local feedback control laws.  
However, a larger networked system tends to be more fragile and less scalable; indeed, attacks to the leading agents or failures in critical communication links can prevent the entire system from being controllable. The ensemble control framework~\eqref{eq:begin} provides an alternative: Instead of controlling a large complex network, one can control a large population of small ones. Thus, the framework is by nature resilient: Malfunctions of nodes or links affect only the corresponding individual systems without touching the others. 

It is also worth noting that having the individual systems to be networks rather than  single agents is, in fact, critical for controllability of an ensemble system. It is well known that controllability of an individual dynamical system is far from being sufficient for an ensemble of such systems to be controllable. This is true regardless of parameterization. 
For example, an ensemble of single integrators $\dot x(t, \sigma) = B(\sigma)u(t)$, for $\sigma\in \Sigma$, can never be controllable regardless of any choice of~$B$.  
To make an ensemble system controllable, a much more stringent condition has to be met by every individual system (e.g., the $A$-matrix cannot be nilpotent as we will see later). 
However, the dynamics of individual agents often do not satisfy these conditions as was illustrated in the above example.    
A solution provided by~\eqref{eq:begin} is to let the agents form  relatively small and cooperative networks---cooperative in a sense that the connections between different agents work together to ``enrich'' the dynamics of the individual networked systems so that the necessary and/or sufficient conditions are met for ensemble controllability. 
From this perspective, the structural controllability problem we address can be viewed as a problem for characterizing the types of structures for information flows between agents that are essential to ensemble controllability.

We further note that the ensemble framework~\eqref{eq:begin} for controlling multi-agent systems is inherently scalable. The scalability is achieved by the formulation that an infinite number (continuum ensemble) of individual systems are considered. In particular, these individual systems are required to be simultaneously controllable under the same control input.  
To see why the formulation promotes scalability, we first note a simple but critical fact~\cite{dirr2021uniform,chen2020controllability}: If an ensemble system~\eqref{eq:begin} is controllable, then so is any subensemble of it---a subensemble is obtained by collecting individual systems of~\eqref{eq:begin} whose indices $\sigma$ belong to a certain closed subset $\Sigma'$ of $\Sigma$. We will review the fact at the end of Section~\ref{sec:preliminaries}. 
Different closed subsets of $\Sigma$ then correspond to different subensembles. 
Now, if we let $\Sigma' := \{\sigma_1,\ldots, \sigma_N\}$ be a finite subset of the interval $\Sigma$, then the corresponding subensemble is nothing but a finite multi-agent system. Thus, controllability of the original ensemble system~\eqref{eq:begin} guarantees controllability of the finite multi-agent system.   
We shall note that having $\Sigma$ to be an infinite set is not only sufficient for finite subensembles of~\eqref{eq:begin} to be controllable, but also necessary. 
Indeed, if  every finite $\Sigma'$ can be embedded, as a subset, into  $\Sigma$, then $\Sigma$ is necessarily infinite. In short, addressing the case where $\Sigma$ is infinite covers all finite cases.    
Scalability of the ensemble control framework then follows as a consequence:  Because a multi-agent system is treated as a finite sub-ensemble of~\eqref{eq:begin}, adding (or removing) any finite number of individual systems into (or out of) the subensemble gives rise to another subensemble. Controllability of any subensemble is guaranteed by the controllability of~\eqref{eq:begin}. 

The problem of structural controllability for linear ensemble systems is new. To the best of author's knowledge, there has not been any work in the area.  
However, the same problem for finite-dimensional linear systems was initiated by Lin almost half a century ago. 
In his seminal paper~\cite{lin1974structural}, Lin addressed the single-input case and provided a necessary and sufficient condition (using matrix forms) for sparsity patterns to be structurally controllable. 
The result was soon generalized to a multi-input case by Shields and Pearson~\cite{shields1976structural} and by Glover and Silverman~\cite{glover1976characterization}. 
For variations of the problem, we mention strong structural controllability~\cite{mayeda1979strong,chapman2013strong,jia2020unifying}, minimal controllability~\cite{liu2011controllability,commault2015single,olshevsky2014minimal,olshevsky2015minimum}, structural controllability over finite fields~\cite{sundaram2012structural}, and structural controllability for driftless bilinear control systems~\cite{tsopelakos2018classification}.

A main contribution of the paper is to provide a graphical condition that is necessary and sufficient for a sparsity pattern to be structurally controllable for linear ensemble systems. The result is formulated in  Theorem~\ref{thm:main1}. 
We compare this condition with the one for structural controllability of {\em single} linear systems: We show that the condition for linear ensemble systems is strictly stronger and provide a simple example for illustration.   
Furthermore, because the property of being structurally controllable is monotone (in a sense that if a sparsity pattern has less zero entries, then it is more likely to be structurally controllable), we also characterize sparsity patterns that are {\em minimally} structurally controllable. A precise definition will be given in Section~\ref{ssec:minimal} and the corresponding result is formulated in Theorem~\ref{thm:minimal}.

The remainder of the paper is organized as follows: In Section~\ref{sec:preliminaries}, we introduce common notations, basic notions from graph theory, and preliminaries for linear ensemble systems. In Section~\ref{sec:formulationandresult}, we formulate  the structural controllability problem and present the main results. Analysis and proofs of the results are provided in Section~\ref{sec:analysis}. We provide conclusions and outlooks in Section~\ref{sec:conclusions}. The paper has an Appendix which provides slight extensions of the main results.

\section{Preliminaries}\label{sec:preliminaries}
In the section, we gather a few common notations and present preliminaries about graph theory and control theory for linear ensemble systems.     

\vspace{.1cm}
\noindent
{\bf Notations.}
For a vector $v = (v_1,\ldots, v_n)\in \R^n$, we let $\|v\|$ be the standard Euclidean norm. We use $\operatorname{diag}(v)$ to denote a diagonal matrix, with $v_i$ the $ii$th entry.    
 
For matrices $A\in \R^{n\times n}$ and $B\in \R^{n\times m}$, we let $C(A, B)$ be the controllability matrix $C(A, B):= [B, AB, \cdots, A^{n - 1}B]$. 

Let $\Sigma$ be a closed interval in $\R$ and $M$ be a Euclidean space or a subset of it.  We denote by $\mathrm{C}^0(\Sigma, M)$ the set of continuous functions from $\Sigma$ to $M$. For any function $f\in \mathrm{C}^0(\Sigma, M)$, we let $\|f\|_{\mathrm{L}^\infty}$ be its $\mathrm{L}^\infty$-norm.    

Let $\GL(n,\R)$ be the general linear group of degree $n$, i.e., it is the set of $n\times n$ invertible matrices. If $P\in \mathrm{C}^0(\Sigma, \GL(n,\R))$, then $P^{-1}$ exists and belongs to $\mathrm{C}^0(\Sigma, \GL(n,\R))$ as well.  

Let $(A, B)$ be an element in $\mathrm{C}^0(\Sigma, \R^{n\times n}\times \R^{n\times m})$, i.e., $A$ and $B$ are continuous, matrix-valued functions. For convenience, but with slight abuse of terminology, we will still call $A$ and $B$ ``matrices'' if there is no confusion.

Let $(A', B')$ be another element in $\mathrm{C}^0(\Sigma, \R^{n\times n}\times \R^{n\times m})$. We say that $(A, B)$ and $(A',B')$ are related by a similarity transformation if there exists a $P\in \mathrm{C}^0(\Sigma, \GL(n, \R))$ such that $A' = PAP^{-1}$ and $B' = PB$.

\subsection{Basic Notions from Graph Theory}\label{ssec:graphtheory} 
Let $G = (V, E)$ be a directed graph (or digraph), with $V = \{v_1,\ldots, v_n\}$ the node set and $E$ the edge set. 
We allow $G$ to have self-arcs. A digraph without self-arcs will be referred to as a simple digraph.  

An edge from $v_i$ to $v_j$ is denoted by $v_{i}v_j$. We call $v_j$ an out-neighbor of $v_i$ and $v_i$ an in-neighbor of $v_j$. For a given subset $V'$ of $V$, we let $N_{\rm in}(V')$ be the set of in-neighbors of $V'$ within $G$, i.e., a node $v_i$ belongs to
$N_{\rm in}(V')$ if there exist a node $v_j$ in $V'$ and an edge $v_iv_j$ in $G$. In case we need to emphasize the role of the digraph $G$, we will write $N_{\rm in}(V'; G)$. Similarly, we let $N_{\rm out}(V')$ (or $N_{\rm out}(V';G)$) be the set of out-neighbors of $V'$ in $G$.

A walk from $v_i$ to $v_j$ is a sequence of nodes $v_{i_1}\ldots v_{i_k}$, with $v_{i_1} = v_i$ and $v_{i_k} = v_j$, such that each $v_{i_j}v_{i_{j + 1}}$, for $j = 1,\ldots,k - 1$, is an edge of $G$.
The length of the walk is the number of edges contained in it. A walk is a path if there is no repetition of nodes in the sequence. A walk is a cycle if there is no repetition of nodes except the repetition of starting- and ending-nodes. Note that a self-arc at a node $v_i$ is a cycle of length~$1$.

A digraph $G$ is strongly connected if for any two different nodes $v_i$ and $v_j$, there is a path from $v_i$ to $v_j$. In particular, if $G$ is a digraph with only a single node (with or without a self-arc), then $G$ is strongly connected.  
A digraph $G$ is rooted if there is a node $v_0$ such that for any other $v_i$ in $G$, there is a path from $v_0$ to $v_i$. The node $v_0$ is a root of $G$.

A subgraph $G' = (V', E')$ of $G$ satisfies $V'\subseteq V$ and $E'\subseteq E$. Given a subset $V'$ of $V$, a subgraph $G' = (V', E')$ is said to be induced by $V'$ if the edge set $E'$ satisfies the following condition: For any two nodes $v_i$ and $v_j$ in $V'$,  $v_iv_j$ is an edge of $G'$ if and only if it is an edge of $G$.

 We say that two subgraphs $G'$ and $G''$ are disjoint if their node sets are disjoint. For a collection of pair-wise disjoint subgraphs $G_1 = (V_1, E_1),\ldots, G_k = (V_k, E_k)$ of $G$, we let their disjoint union be defined as a digraph $G' = (V', E')$ with $V' := \sqcup^k_{i = 1} V_i$ and $E':=\sqcup^k_{i = 1} E_i$. 

A digraph $G$ is acyclic if it does not contain any cycle as its subgraph. If $G$ is also rooted, then it has a unique root. A directed tree (also known as an arborescence) is a special rooted acyclic digraph such that every node, except the root node $v_0$, has only one in-neighbor. It follows that for any given node $v_i$ other than $v_0$, there is a unique path from $v_0$ to $v_i$. The depth of the node $v_i$ is the length of the path. The depth of the root $v_0$ is $0$ by default. The depth of the tree $G$ is  the maximal value of depths of all the nodes.  

Note that if $G = (V, E)$ is rooted with $v_0$ a root, then it contains a tree $G' = (V, E')$, with the same node set, as a subgraph such that $v_0$ is the root of $G'$. The subgraph $G'$ is called a directed spanning tree of $G$.

\subsection{Control Theory for Linear Ensemble Systems}\label{ssec:linearensemblesystem}
Let $\Sigma$ be the unit closed interval $\Sigma := [0,1]$ in $\R$. The choice of the closed interval is for ease of presentation. The results established in the paper do not depend on a particular choice of interval as we will see later in Prop.~\ref{prop:union}.  
For convenience, we reproduce below the linear ensemble system~\eqref{eq:begin}: 
\begin{equation}\label{eq:linearsys}
	\dot x(t, \sigma) = A(\sigma) x(t,\sigma) + B(\sigma) u(t), \quad \forall \sigma\in \Sigma, 
\end{equation}
where $A:\Sigma \to \R^{n\times n}$ and $B: \Sigma\to \R^{n\times m}$ are continuous functions.  
The control input $u$ is said to be {\em admissible} if for any given time interval $[0,T]$, the function $u: [0, T] \to \R^m$ is integrable.

Let $x_\Sigma(t): \Sigma \to \R^n$ be the map that sends $\sigma$ to $x(t, \sigma)$.   
	We call $x_\Sigma(t)$ a {\em profile} at time~$t$.  In the paper, we consider only continuous profiles, i.e.,  $x_\Sigma(t)\in \mathrm{C}^0(\Sigma, \R^n)$.

 We now have the following definition:
	
\begin{definition}\label{def:controllability}
 	The linear ensemble system~\eqref{eq:linearsys} is \textbf{uniformly controllable} if for any initial profile $x_\Sigma(0)\in {\rm C}^0(\Sigma, \R^n)$, any target profile $\hat x_\Sigma\in {\rm C}^0(\Sigma, \R^n)$, and any error tolerance $\epsilon > 0$, there exist a time\footnote{It is known~\cite{triggiani1975controllability} that if system~\eqref{eq:linearsys} is uniformly controllable for some $T$, then it is uniformly controllable for {\em all} $T > 0$.} $T > 0$ and an admissible control input $u:[0,T]\to \R^m$ such that the solution $x_\Sigma(t)$  generated by~\eqref{eq:linearsys} satisfies $\|x_\Sigma(T) - \hat x_\Sigma\|_{\rm L^\infty} < \epsilon$.    
 \end{definition}

Because system~\eqref{eq:linearsys} is completely determined by the $(A, B)$ pair, we will some time use the pair to denote the system and simply say that $(A, B)$ is uniformly controllable.

Necessary and/or sufficient conditions for uniform controllability of system~\eqref{eq:linearsys} have widely been investigated in the literature (see, for example,~\cite{li2007ensemble,helmke2014uniform,li2015ensemble,dirr2021uniform}). We present below a condition that utilizes the notion of controllable subspace. For that, we first have the following definition:   	 
	 
\begin{definition}
	Let the $(A,B)$ pair be given in~\eqref{eq:linearsys}.  
	Let $\mathcal{L}(A, B)$ be the ${\rm L}^\infty$-closure of the vector space spanned by the columns of $A^kB$, for all $k \ge 0$.
	We call $\mathcal{L}(A, B)$ the \textbf{controllable subspace} associated with system~\eqref{eq:linearsys}.  
\end{definition}

The following necessary and sufficient condition, adapted from~\cite{triggiani1975controllability}, is a straightforward generalization of the Kalman rank condition for finite-dimensional linear systems:

\begin{lemma}\label{lem:controllablesubspace}
	A pair $(A, B)\in \mathrm{C}^0(\Sigma, \R^{n\times n}\times \R^{n\times m})$ is uniformly controllable if and only if 
	$\mathcal{L}(A, B) = \mathrm{C}^0(\Sigma, \R^n)$. 
\end{lemma}

We now relate controllability of system~\eqref{eq:linearsys} to controllability of its subensembles. 
 Specifically, let $\Sigma'$ be a closed subset of $\Sigma$ and consider the following ensemble system:
 \begin{equation}\label{eq:subensemble}
 \dot x(t,\sigma) = A(\sigma)x(t,\sigma) + B(\sigma) u(t), \quad \forall\sigma\in \Sigma', 
 \end{equation}
where the $(A, B)$ pair is the same as the one for system~\eqref{eq:linearsys}, but with their domains restricted to $\Sigma'$. We call system~\eqref{eq:subensemble} the \textit{\textbf{subensemble-$\Sigma'$}} of system~\eqref{eq:linearsys}. The following result is known (see, for example,~\cite{dirr2021uniform,chen2020controllability}):

\begin{lemma}\label{lem:subensemble}
	If system~\eqref{eq:linearsys} is uniformly controllable, then so is system~\eqref{eq:subensemble}.
\end{lemma}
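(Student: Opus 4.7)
The plan is to reduce the sub-ensemble problem to the full ensemble problem by continuously extending the prescribed profiles from $\Sigma'$ to all of $\Sigma$, applying controllability of~\eqref{eq:linearsys}, and then restricting back. Concretely, fix an initial profile $x_{\Sigma'}(0)\in \mathrm{C}^0(\Sigma', \R^n)$, a target profile $\hat x_{\Sigma'}\in \mathrm{C}^0(\Sigma', \R^n)$, and a tolerance $\epsilon>0$. Since $\Sigma = [0,1]$ is a compact Hausdorff (hence normal) space and $\Sigma'$ is closed in $\Sigma$, I would invoke the Tietze extension theorem (applied componentwise) to obtain functions $y_0,\hat y\in \mathrm{C}^0(\Sigma, \R^n)$ whose restrictions to $\Sigma'$ coincide with $x_{\Sigma'}(0)$ and $\hat x_{\Sigma'}$, respectively. (On an interval one can actually use the more elementary construction of piecewise linear interpolation across the open intervals making up $\Sigma\setminus\Sigma'$.)

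Next, apply the hypothesis: there exist $T>0$ and an admissible $u:[0,T]\to \R^m$ such that the solution $y_\Sigma(\cdot)$ of~\eqref{eq:linearsys} with initial profile $y_\Sigma(0)=y_0$ satisfies $\|y_\Sigma(T)-\hat y\|_{\mathrm{L}^p}<\epsilon$. The key observation now is that the ensemble dynamics~\eqref{eq:linearsys} are decoupled across the parameter: the value $y_\Sigma(t,\sigma)$ depends only on $A(\sigma)$, $B(\sigma)$, $y_0(\sigma)$, and $u|_{[0,t]}$. Consequently, the restriction $y_\Sigma(t)|_{\Sigma'}$ is exactly the solution $x_{\Sigma'}(t)$ of the sub-ensemble~\eqref{eq:subensemble} with the same control $u$ and initial profile $x_{\Sigma'}(0)$.

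It remains to transfer the error bound from $\Sigma$ to $\Sigma'$. For $1\le p<\infty$ this follows from
\[
\|x_{\Sigma'}(T)-\hat x_{\Sigma'}\|_{\mathrm{L}^p(\Sigma')}^p
= \int_{\Sigma'}\|y_\Sigma(T,\sigma)-\hat y(\sigma)\|^p\,\mathrm{d}\sigma
\le \int_{\Sigma}\|y_\Sigma(T,\sigma)-\hat y(\sigma)\|^p\,\mathrm{d}\sigma
= \|y_\Sigma(T)-\hat y\|_{\mathrm{L}^p}^p,
\]
and for $p=\infty$ the same inequality holds with $\sup_{\Sigma'}$ replaced by $\sup_{\Sigma}$. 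In either case the error on $\Sigma'$ is strictly less than $\epsilon$, which by Definition~\ref{def:controllability} yields $\mathrm{L}^p$-controllability of the sub-ensemble.

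I do not anticipate a substantive obstacle: the proof is essentially a formal manipulation once one recognizes the pointwise-in-$\sigma$ nature of the dynamics. The only subtlety worth flagging is the $p=\infty$ case, where continuity of profiles must be preserved by the extension; this is precisely what Tietze guarantees, and continuity of $x_{\Sigma'}(t)$ in $\sigma$ is automatic since $y_\Sigma(t)$ is continuous on $\Sigma$ and continuity is preserved under restriction to a subset.
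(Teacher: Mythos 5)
Your proof is correct. The paper does not actually prove Lemma~\ref{lem:subensemble}; it simply cites it as known from the references, so there is no in-paper argument to compare against. Your argument---extend the initial and target profiles from the closed set $\Sigma'$ to $\Sigma$ by Tietze (or piecewise-linear interpolation on the complementary open intervals), exploit the fact that the dynamics are decoupled in $\sigma$ so that the restriction of the full-ensemble trajectory to $\Sigma'$ is the subensemble trajectory under the same input, and then use the monotonicity of the $\mathrm{L}^p$-norm under restriction of the domain---is exactly the standard route, and each step is sound for all $1\le p\le\infty$. The one point you might make explicit is that the target profiles for the subensemble are, by definition, continuous functions on $\Sigma'$, so surjectivity of the restriction map $\mathrm{C}^0(\Sigma,\R^n)\to\mathrm{C}^0(\Sigma',\R^n)$ (which is what Tietze gives you) is precisely what is needed to cover \emph{every} admissible initial/target pair of the subensemble; you do say this, so the argument is complete.
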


Finally, note that there are other controllability notions associated with system~\eqref{eq:linearsys}, such as $\mathrm{L}^p$-controllability, for $p \geq 1$. We review these notions and the corresponding controllability results in the Appendix.

\section{Problem Formulation and Main Results}\label{sec:formulationandresult}
In this section, we formulate the structural controllability problem for linear ensemble systems and provide complete solutions.   
We introduce key definitions and the problem in Section~\ref{ssec:sparse}. 
Then, in  Section~\ref{ssec:main1}, we present a necessary and sufficient condition for a sparsity pattern to be structurally controllable (the result is formulated in Theorem~\ref{thm:main1}).     
Finally, in Section~\ref{ssec:minimal}, we focus on  sparsity patterns with minimal numbers of nonzero entries. A complete characterization of these patterns is provided in Theorem~\ref{thm:minimal}.    

\subsection{Problem Formulation for Structural Controllability}\label{ssec:sparse}
We still let $\Sigma$ be the unit closed interval $[0,1]$ and note again that the choice of the interval is irrelevant (see Prop.~\ref{prop:union} in the subsection). 
Let $(A, B)\in \mathrm{C}^0(\Sigma, \R^{n\times n}\times \R^{n\times m})$ be a sparse matrix pair.   
By convention, we will use a digraph $G$ to describe the sparsity pattern of $(A,B)$. The construction of the digraph is given in the following definition: 

\begin{definition}
For a pair $(A, B)\in \mathrm{C}^0(\Sigma, \R^{n\times n}\times \R^{n\times m})$, we define a digraph $G = (V, E)$ on $(n + m)$ nodes as follows: The node set $V$ is a disjoint union of two subsets $V_\alpha := \{\alpha_1,\ldots, \alpha_{n}\}$ and $V_\beta := \{\beta_1,\ldots, \beta_m\}$.    
The edge set $E$ is determined by the following two items:  
\begin{enumerate}
\item There is an edge from $\alpha_j$ to $\alpha_i$ if the $ij$th entry of $A$ is not the zero function.
\item There is an edge from $\beta_j$ to $\alpha_i$ if the $ij$th entry of $B$ is not the zero function. 
\end{enumerate}  
We call $G$ the \textbf{\textit{digraph induced by matrix pair}} $(A, B)$. The nodes in $V_\alpha$ and the nodes in $V_\beta$ are termed \textbf{\textit{state-nodes}} and \textbf{\textit{control-nodes}}, respectively. 
\end{definition}

Note that the control-nodes of $G$ do not have any incoming neighbor.  
For given nonnegative integers $n$ and $m$, let $\mathcal{G}_{n,m}$ be the set of digraphs $G$ induced by matrix pairs $(A, B)\in \mathrm{C}^0(\Sigma, \R^{n\times n}\times \R^{n\times m})$. Equivalently, a digraph $G$ belongs to $\mathcal{G}_{n,m}$ if it has $(n + m)$ nodes $\alpha_1,\ldots, \alpha_{n}$ and $\beta_{1},\ldots, \beta_{m}$ and the $\beta$-nodes do not have incoming neighbors.  
For a later purpose, we allow $n$ or $m$ to be $0$. If $n = 0$ (resp. $m = 0$), then there is no state-node (resp. control-node) in $G$. We let  
$$\mathcal{G}:= \cup^\infty_{n,m = 0} \mathcal{G}_{n,m}.$$ 
Every graph $G\in \mathcal{G}$ then corresponds to a sparsity pattern.

Conversely, for any given such digraph $G$, we introduce a class of sparse pairs $(A, B)$ that correspond to it. The correspondence is given in the following definition:

\begin{definition}
A pair $(A, B)\in \mathrm{C}^0(\Sigma, \R^{n\times n}\times \R^{n\times m})$ is \textit{\textbf{compliant}} with $G = (V, E)\in \mathcal{G}_{n,m}$ if the digraph $G' = (V, E')$ induced by $(A,B)$ is a subgraph of $G$, i.e., $E'\subseteq E$. \end{definition}

Let $(A, B)$ be a pair compliant with a digraph $G\in \mathcal{G}$. An entry $a_{ij}$ of $A$ or an entry $b_{ij}$ of $B$ is said to be a $\star$-entry if $\alpha_j \alpha_i$ or $\beta_j\alpha_i$ is an edge of $G$. 
The $\star$-entries can be arbitrary continuous functions from $\Sigma$ to $\R$. The other entries of $A$ or $B$ have to be identically zero. 
See Fig.~\ref{fig:sparsepair} for an illustration.

\begin{figure}[ht]
\begin{center}
\includegraphics[width = 0.6\textwidth]{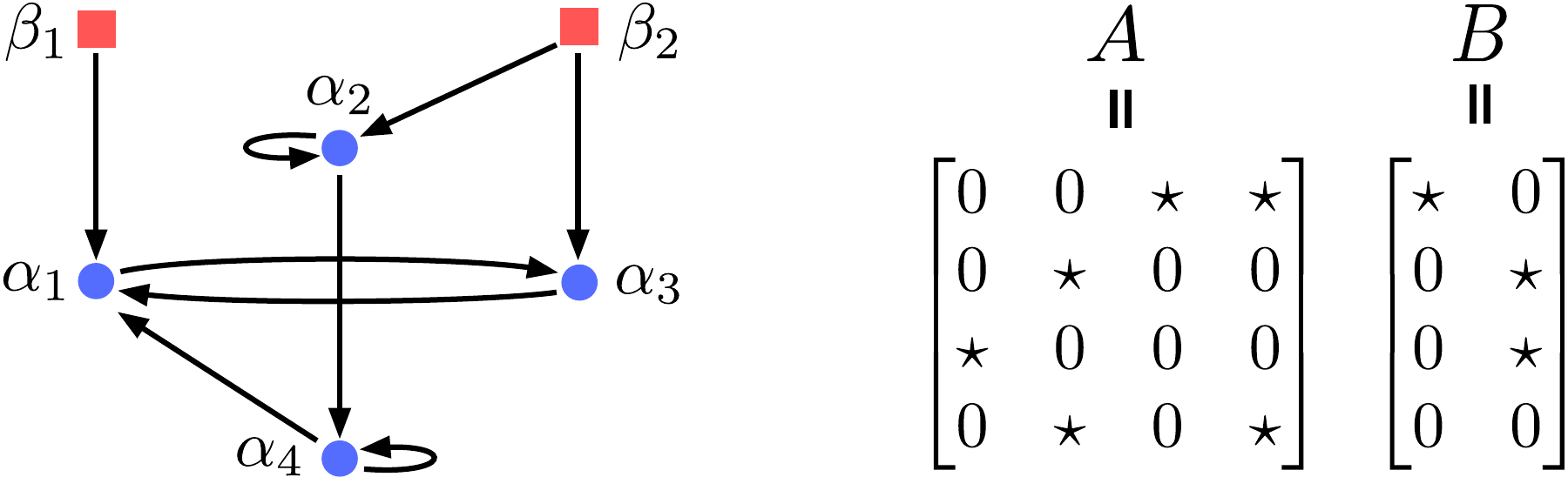}
\caption{\small {\em Left:} A digraph $G\in \mathcal{G}_{4,2}$ with four state-nodes (blue dots) and two control-nodes (red squares). {\em Right:} Sparse matrices $(A, B)$ compliant with $G$. The $\star$-entries correspond to the edges of~$G$.}\label{fig:sparsepair}
\end{center}
\end{figure}

For a given digraph $G\in\mathcal{G}_{n,m}$, we let $\mathbb{V}(G)$ be the set of matrix pairs $(A,B)$ compliant with $G$: 
$$
\mathbb{V}(G):= \{(A, B)\in \mathrm{C}^0(\Sigma, \R^{n\times n} \times \R^{n\times m}) \mid 
 (A, B) \mbox{ is compliant with } G \}.
$$  
We look for pairs $(A, B)\in \mathbb{V}(G)$ that are uniformly controllable. Structural controllability of $G$ relies on the existence of these pairs. 
Precisely, we have the following definition:

\begin{definition}\label{def:StrucContr}
	A digraph $G\in \mathcal{G}$ is \textbf{structurally controllable} if there exists a uniformly controllable pair $(A, B)\in \mathbb{V}(G)$.  
\end{definition}

If $G$ has no state-node (i.e., $n = 0$), then it is structurally controllable by default.

Apparently, the definition of structural controllability depends on the underlying parameterization space. We have so far assumed that $\Sigma$ is the closed unit interval $\Sigma = [0,1]$. 
The following fact will relax the constraint and establishes equivalence of structural controllability for a class of parameterization spaces:

\begin{proposition}\label{prop:union}
A digraph $G\in \mathcal{G}$ is structurally controllable for $\Sigma = [0,1]$ if and only if it is  structurally controllable for any finite union of closed intervals in $\R$. 
\end{proposition}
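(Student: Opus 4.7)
The plan is to handle the two implications separately. The $(\Leftarrow)$ direction is immediate: $[0,1]$ is itself a finite union of (one) closed interval, so structural $\mathrm{L}^p$-controllability for every such union specializes to the case $\Sigma = [0,1]$. My attention will therefore focus on the $(\Rightarrow)$ direction, where the strategy is to transport a witnessing pair on $[0,1]$ to any prescribed finite union $\Sigma' \subset \R$ of closed intervals via an affine rescaling combined with Lemma~\ref{lem:subensemble}.

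For $(\Rightarrow)$, let $(A, B) \in \mathbb{V}(G)$ be a witnessing pair that is $\mathrm{L}^p$-controllable on $[0,1]$. Since $\Sigma'$ is bounded, I choose $M_1 < M_2$ with $\Sigma' \subseteq [M_1, M_2]$ and define the affine homeomorphism $\psi\colon [M_1, M_2] \to [0,1]$ by $\psi(\sigma) = (\sigma - M_1)/(M_2 - M_1)$. Then $\psi(\Sigma')$ is a closed subset of $[0,1]$. I then set $(\tilde A, \tilde B)(\sigma) := \bigl(A(\psi(\sigma)), B(\psi(\sigma))\bigr)$ for $\sigma \in \Sigma'$. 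This pair is continuous, and $(\tilde A, \tilde B) \in \mathbb{V}(G)$ because any entry of $A$ or $B$ that is identically zero pulls back, via composition with $\psi$, to an identically zero entry, so the sparsity pattern is preserved.

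The remaining work, and the step I expect to require the most care, is to verify that $(\tilde A, \tilde B)$ is $\mathrm{L}^p$-controllable on $\Sigma'$. I plan to achieve this in two substeps. First, by Lemma~\ref{lem:subensemble} applied to the closed subset $\psi(\Sigma') \subseteq [0,1]$, the subensemble of $(A, B)$ on $\psi(\Sigma')$ is $\mathrm{L}^p$-controllable. Second, the pull-back operator $\psi^*\colon g \mapsto g \circ \psi$ is a Banach isomorphism from $\mathrm{L}^p(\psi(\Sigma'), \R^n)$ onto $\mathrm{L}^p(\Sigma', \R^n)$ for $p < \infty$, by the change-of-variables formula with constant positive Jacobian $(M_2 - M_1)^{-1}$, and it is also an isomorphism of the corresponding $\mathrm{C}^0$-spaces for $p = \infty$ since $\psi$ is a homeomorphism. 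Because $\psi^*\bigl((A^k B)v\bigr) = \tilde A^k \tilde B \, v$ for every $k \ge 0$ and every $v \in \R^m$, the operator $\psi^*$ carries the controllable subspace $\mathcal{L}^p(A,B)$ restricted to $\psi(\Sigma')$ onto $\mathcal{L}^p(\tilde A, \tilde B)$. Invoking Lemma~\ref{lem:controllablesubspace} at both ends then delivers the desired $\mathrm{L}^p$-controllability of $(\tilde A, \tilde B)$ on $\Sigma'$.

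The main obstacle, as indicated, is this pull-back transfer: one needs the dense subspace spanned by $\{(A^k B) v\}$ on $\psi(\Sigma')$ to pull back to a dense subspace of the target space on $\Sigma'$. The affine choice of $\psi$ makes the change of variable routine (the Jacobian is a single positive constant, so the two $\mathrm{L}^p$-norms are comparable up to a fixed factor), and the density transfer then rests on the pointwise identity $(A^k B)(\psi(\sigma)) = \tilde A^k \tilde B(\sigma)$, which ensures $\psi^*$ intertwines the generating columns on the two parameterization spaces.
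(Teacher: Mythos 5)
Your proof is correct and uses essentially the same ingredients as the paper's: an affine reparameterization that induces an isomorphism of the $\mathrm{L}^p$-controllable subspaces (preserving the sparsity pattern), combined with Lemma~\ref{lem:subensemble}; the only cosmetic difference is that you restrict first and rescale second, while the paper rescales to a large closed interval containing $\Sigma'$ and then restricts. Note only that the paper establishes a slightly stronger converse---structural controllability for a \emph{single} finite union already implies it for $[0,1]$, via restriction to one component interval followed by rescaling---whereas you dispatch that direction by specializing the universally quantified hypothesis to $[0,1]$ itself, which suffices for the proposition as worded.
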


\begin{proof}
	We first show that $G$ is structurally controllable for $\Sigma$ if and only if it is for an arbitrary closed interval $\Sigma' := [r_1, r_2]$, with $r_1 < r_2$. For a given continuous pair $(A, B)$ on $\Sigma$, we define a continuous pair $(A',B')$ on $\Sigma'$ as follows: 
	\begin{equation*}\label{eq:defA'B'}
	A'(\sigma') := A\left (\nicefrac{\sigma' - r_1}{r_2 - r_1} \right ) \mbox{ and } 
	B'(\sigma') := B\left (\nicefrac{\sigma' - r_1}{r_2 - r_1} \right ).
	\end{equation*} 
	It should be clear that if a function $f$ belongs to the controllable subspace $\mathcal{L}(A, B)$, then the function $f':\Sigma'\to\R$ defined by 
	$f'(\sigma'):= f(\nicefrac{\sigma' - r_1}{r_2 - r_1})$,   
	belongs to the controllable subspace $\mathcal{L}(A', B')$.  
	Conversely, if $f'\in \mathcal{L}(A', B')$, then the function $f:\Sigma\to \R$ defined by $f(\sigma):= f'(r_1 + (r_2 -r_1)\sigma)$ 
	belongs to $\mathcal{L}(A, B)$. 
	 Thus, the two Banach spaces $\mathcal{L}(A, B)$ and $\mathcal{L}(A',B')$ are isomorphic. By Lemma~\ref{lem:controllablesubspace}, $(A, B)$ is uniformly controllable if and only if $(A',B')$ is. 
	Moreover, the pair $(A', B')$ is compliant with $G$ if and only if $(A,B)$ is. It then follows that $G$ is structurally controllable for $\Sigma$ if and only if it is for~$\Sigma'$. 
	
	We next let $\Sigma'$ be a finite union of pairwise disjoint closed intervals, i.e., $\Sigma'= \cup_{i = 1}^k\Sigma'_i$. We show that if $G$ is structurally controllable for $\Sigma$ if and only if it is structurally controllable for $\Sigma'$.   
	First, we assume that $G$ is structurally controllable for $\Sigma'$. Then, by Lemma~\ref{lem:subensemble}, it has to be structurally controllable for every single closed interval $\Sigma'_i$. It follows from the previous arguments that $G$ is also structurally controllable for the unit closed interval $\Sigma$. 
	We now assume that $G$ is structurally controllable for $\Sigma$. Using again the above arguments, we have that $G$ is uniformly controllable for any closed interval in $\R$.  
	Let $\Sigma''$ be a closed interval, sufficiently large, such that it contains $\Sigma'$ as a subset. 
	Because $G$ is structurally controllable for $\Sigma''$, we conclude from Lemma~\ref{lem:subensemble} that $G$ is structurally controllable for $\Sigma'$. This completes the proof.   
\end{proof}

Note that if $\Sigma$ is not a finite union of closed intervals, then the class of structurally controllable digraphs can be completely different. 
A case of particular interest is that $\Sigma$ is a circle. Note that every closed interval (or a finite union of them) can be embedded into a circle, but not the other way around. Thus, the class of structurally controllable digraphs for a circle is a subset of the class for a closed interval. 
For continuum spaces whose dimensions are greater than one, we conjecture that there does not exist any structural controllable digraph. 
This conjecture is based upon a recent negative result~\cite{chen2020controllability} which says that any real-analytic linear ensemble system is not uniformly controllable if the dimension of the underlying parameterization space is greater than one.  
 
\subsection{A Necessary and Sufficient Condition}\label{ssec:main1} 
In this subsection, we provide a necessary and sufficient condition for a digraph $G\in \mathcal{G}$ to be structurally controllable. The condition comprises two parts: One is about  accessibility of $G$ to the control-nodes and the other one is about existence of Hamiltonian decomposition admitted by the state-nodes of~$G$.
We give precise definitions below.

\begin{definition}
A digraph $G\in \mathcal{G}$ is \textbf{accessible} to control-nodes if for each state-node $\alpha_j$, there exist a control-node $\beta_i$ and a path from $\beta_i$ to $\alpha_j$. 
\end{definition}

We also need the following definition:

\begin{definition}
Let $H = (V, E)$ be an arbitrary digraph. The digraph $H$ admits a \textbf{Hamiltonian decomposition} if it contains a subgraph $H' = (V, E')$, with the same node set $V$ and $E' \subseteq E$, such that $H'$ is a disjoint union of cycles.  	
\end{definition}

With the above definitions, we will now present the first main result of the paper:

\begin{theorem}\label{thm:main1}
	A digraph $G\in\mathcal{G}$ is structurally controllable if and only if the following hold:
	\begin{itemize}
	\item[A1.] The digraph $G$ is accessible to control-nodes. 
	\item[A2.] The subgraph $H$ induced by the state-nodes admits a Hamiltonian decomposition.  
	\end{itemize}
\end{theorem}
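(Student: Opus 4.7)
The plan is to prove the equivalence by handling necessity of (A1) and (A2) separately and then constructing an explicit compliant controllable pair for the sufficiency direction.

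Necessity of (A1) proceeds by a Kalman-type decomposition. If some state-node $\alpha_j$ admits no directed path in $G$ from the set $\{\beta_1,\ldots,\beta_m\}$ of control-nodes, then the set $S$ of state-nodes reachable from the control-nodes is a strict subset of $\{\alpha_1,\ldots,\alpha_n\}$, and after permuting coordinates every compliant pair takes block-triangular form
\begin{equation*}
A(\sigma) = \begin{pmatrix} A_{11}(\sigma) & 0 \\ A_{21}(\sigma) & A_{22}(\sigma) \end{pmatrix}, \qquad B(\sigma) = \begin{pmatrix} B_1(\sigma) \\ 0 \end{pmatrix},
\end{equation*}
where $A_{22}$ acts on the unreachable coordinates and receives no control. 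The lower block of $A(\sigma)^k B(\sigma)$ is identically zero for every $k\geq 0$, so $\mathcal{L}^p(A,B)$ is contained in the proper closed subspace $\{\phi\in\mathrm{L}^p(\Sigma,\R^n) \mid \phi|_{S^c}\equiv 0\}$, which by Lemma~\ref{lem:controllablesubspace} rules out $\mathrm{L}^p$-controllability. For necessity of (A2), I would use the Leibniz expansion $\det A(\sigma)=\sum_{\pi\in S_n}\mathrm{sgn}(\pi)\prod_i A_{i,\pi(i)}(\sigma)$ and observe that each nonvanishing monomial corresponds to a permutation whose cycle decomposition is a Hamiltonian decomposition of $H$; failure of (A2) therefore forces $\det A(\sigma)\equiv 0$ for every compliant $A$. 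From the adjugate identity $\mathrm{adj}(A)\,A\equiv 0$ (and an inductive rank-reduction using higher compound matrices when $\mathrm{adj}(A)$ itself vanishes identically) one extracts a nonzero continuous left-null field $w\in\mathrm{C}^0(\Sigma,\R^n)$ with $w(\sigma)^\top A(\sigma)\equiv 0$. For any $\phi$ in the algebraic span of columns of $\{A^kB\}_{k\geq 0}$, $w^\top\phi$ equals an $\R$-linear combination of the $m$ fixed scalar functions $(w^\top B)_1,\ldots,(w^\top B)_m$, so $w^\top\mathcal{L}^p(A,B)$ lies in an at-most-$m$-dimensional (hence closed and proper) subspace of $\mathrm{L}^p(\Sigma,\R)$. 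Since $w\not\equiv 0$ makes $w^\top\mathrm{L}^p(\Sigma,\R^n)$ infinite-dimensional, Lemma~\ref{lem:controllablesubspace} again yields a contradiction.

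For sufficiency, assume (A1) and (A2), and fix a Hamiltonian decomposition of $H$ into disjoint directed cycles $C_1,\ldots,C_r$ of lengths $k_1,\ldots,k_r$. On each cycle $C_\ell$, assign the $\star$-entries so that the product of the $A$-entries traversed around $C_\ell$ equals $\sigma-c_\ell$ for generic real constants $c_\ell$; the block of $A$ associated with $C_\ell$ is then conjugate to a shift-companion matrix whose eigenvalues are the $k_\ell$-th roots of $\sigma-c_\ell$. For a generic choice of the $c_\ell$, the resulting $n$ eigenvalues $\lambda_1(\sigma),\ldots,\lambda_n(\sigma)$ of $A(\sigma)$ are continuous, pairwise distinct, and point-separating on $[0,1]$, so there exists a continuous similarity $P\in\mathrm{C}^0(\Sigma,\GL(n,\R))$ diagonalizing $A$. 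The remaining $\star$-entries of $A$ (those not used by the cycle decomposition) together with the $\star$-entries of $B$ are then assigned so that after the change of basis every row of $P(\sigma)^{-1}B(\sigma)$ is everywhere nonvanishing; condition (A1) is precisely what makes such a routing from control-nodes to every eigendirection possible. By Lemma~\ref{lem:similarity}, controllability of $(A,B)$ reduces to controllability of the diagonal pair $(\mathrm{diag}(\lambda_i),P^{-1}B)$, and the classical Stone-Weierstrass / Vandermonde argument for diagonal ensemble systems (see \cite{li2007ensemble,dirr2018uniform}) shows this pair is $\mathrm{L}^\infty$-controllable. Lemma~\ref{lem:controllablesubspace} converts density into controllability, and Lemma~\ref{lem:comparison} extends the conclusion to every $1\leq p\leq\infty$.

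The main obstacle is the necessity of (A2): converting the pointwise algebraic fact $\det A(\sigma)\equiv 0$ into the functional-analytic obstruction on $\mathcal{L}^p(A,B)$ requires producing a genuinely continuous left-null field of $A(\sigma)$, which is delicate when the rank of $A$ drops further on a subset of $\Sigma$ and the adjugate itself vanishes identically. Carefully iterating a compound-matrix argument to locate the first nonvanishing minor level, and then tracking continuity of the associated null direction across rank jumps, will be where the bulk of the proof is concentrated. A secondary difficulty on the sufficiency side is showing that the cycle-based entries and the non-cycle entries can be chosen jointly so that the eigenvalues remain distinct and point-separating while the diagonalized input matrix $P^{-1}B$ has no vanishing row; here (A1) is exactly the structural hypothesis that makes the assignment possible.
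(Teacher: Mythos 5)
Your overall skeleton matches the paper's (Kalman decomposition for (A1); Leibniz expansion forcing $\det A\equiv 0$ when (A2) fails; an explicit cycle-based construction for sufficiency), but both halves contain gaps. On necessity of (A2): your route from $\det A\equiv 0$ to non-controllability via a continuous left-null field is genuinely different from the paper's (which restricts to a subinterval of constant rank, applies Dole\v{z}al's theorem, and inducts on $n$), and the functional-analytic conclusion you draw from such a $w$ is sound. The problem is producing $w$: a continuous, not-identically-zero solution of $w^\top A\equiv 0$ on all of $[0,1]$ is not delivered by adjugates or compound matrices when the rank of $A(\sigma)$ jumps, and you correctly flag this as the hard part without resolving it. The clean repair is to localize rather than globalize: by Lemma~\ref{lem:subensemble} it suffices to defeat controllability of a subensemble, so restrict to a closed subinterval on which $\operatorname{rank}A$ equals its maximum $k<n$, build $w$ there by Cramer's rule on a nonvanishing $k\times k$ minor, and you are done---arguably more economically than the paper's induction. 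As written, though, the step is a promissory note.

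On sufficiency the gap is more serious: ``generic $c_\ell$'' does not work. If $c_\ell\in[0,1]$, all $k_\ell$ eigenvalues of that block collapse to $0$ at $\sigma=c_\ell$ and the block is nilpotent there, so simplicity and continuous diagonalizability fail. Even with $c_\ell\notin[0,1]$, the Stone--Weierstrass/Vandermonde argument for diagonal ensembles needs the spectra at distinct parameters to be disjoint, i.e.\ $\eig(A(\sigma))\cap\eig(A(\sigma'))=\emptyset$ for $\sigma\neq\sigma'$; this fails whenever the moduli ranges $|\sigma-c_\ell|^{1/k_\ell}$ and $|\sigma-c_{\ell'}|^{1/k_{\ell'}}$ of two cycles overlap, and overlapping is an open, perturbation-stable configuration, so no genericity argument rescues it. You must actively separate the ranges, which is exactly what the paper does by choosing $0<r_1(0)<\cdots<r_N(0)$ and setting $A(\sigma)=\rho(\sigma)A(0)$ with $\rho(\sigma)=\kappa\sigma+1$ and $\kappa$ small enough (see~\eqref{eq:defkappa}) that the images of the $r_i$ remain pairwise disjoint. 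Finally, ``assign the remaining entries so that every row of $P^{-1}B$ is everywhere nonvanishing'' is asserted rather than derived from (A1); the paper obtains pointwise controllability for free from $C(A(\sigma),b(\sigma))=C(A(0),b(0))\operatorname{diag}[1,\rho(\sigma),\ldots,\rho^{n-1}(\sigma)]$ after first reducing, via Proposition~\ref{prop:redminimal} and Lemma~\ref{lem:monotone}, to minimal digraphs for which a controllable finite-dimensional compliant pair exists by Lin's theorem and survives a small perturbation. Without some such mechanism your construction does not yet verify that each individual system $(A(\sigma),B(\sigma))$ is controllable.
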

 
\begin{definition}
The two items A1 and A2 combined will be referred to as \textbf{\textit{condition-A}}.   
\end{definition}

For illustration, we consider the digraph $G$ in Fig.~\ref{fig:sparsepair}. First, note that $G$ is accessible: There are edges $\beta_1\alpha_1$, $\beta_2\alpha_2$, $\beta_2\alpha_3$, and a path $\beta_2\alpha_2\alpha_4$, ending with the four state-nodes.  
	Next, note that the subgraph $H$ induced by the state-nodes admits a Hamiltonian decomposition: Nodes $\alpha_1$ and $\alpha_3$ form a $2$-cycle and the remaining two nodes $\alpha_2$, $\alpha_4$ have self-arcs. By Theorem~\ref{thm:main1}, $G$ is structurally controllable. 

It is known (see, e.g.,~\cite[Lemma 1]{helmke2014uniform}) that if a linear ensemble system is uniformly controllable, then all of its individual systems are controllable. Thus, if $G$ is structurally controllable for linear ensemble systems, then it is structurally controllable for finite-dimensional linear systems. However, the converse is not true. We elaborate below on this fact:

\begin{remark}\label{rmk:comparison}
{\normalfont 
For finite-dimensional linear systems,  
a necessary and sufficient condition~\cite{lin1974structural,shields1976structural,olshevsky2015minimum} for structural controllability 
can be formulated as follows: A digraph $G\in \mathcal{G}$ is structural controllable if and only if it satisfies item A1 in Theorem~\ref{thm:main1} and the following item:   
\begin{enumerate}
	\item[\em C2.] {\em For any subset $V'$ of state-nodes of $G$,  $|N_{\rm in}(V')| \ge |V'|$.}
\end{enumerate}
We recall that $N_{\rm in}(V')$ is the set of in-neighbors of $V'$. 
The above item C2 is strictly weaker than the item A2 in Theorem~\ref{thm:main1}. To see this, we let $G$ satisfy A2 and $H$ be the subgraph induced by the state-nodes. Let $H'$ be a disjoint union of cycles that cover all the state-nodes. Then, within the subgraph $H'$, we have that for any subset $V'$ of state-nodes, $|N_{\rm in}(V'; H')| = V'$. It then follows that 
$$
|N_{\rm in}(V')| = |N_{\rm in}(V'; G)|  \geq |N_{\rm in}(V'; H')| = |V'|.      
$$
On the other hand, there exist digraphs that satisfy items A1 and C2, but not A2.  One can simply take the class of directed paths as an example  (see Fig.~\ref{fig:comparison}).   
\begin{figure}[ht]
\begin{center}
\includegraphics[width = 0.45\textwidth]{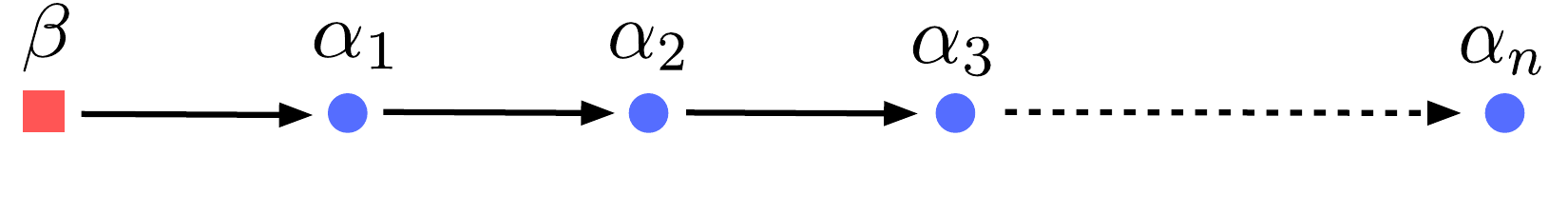}
\caption{\small Consider the path digraph $G_{n,1}$ on $n$ state-nodes and a single control-node, with edges $\beta \alpha_1$ and $\alpha_{i}\alpha_{i+1}$ for $i = 1,\ldots, n-1$. The path digraph is structural controllable for finite-dimensional linear systems; indeed, the $n$th order integrator $\frac{d^n}{dt^n} x(t) = u(t)$ is controllable and the corresponding matrix pair $(A, B)$ is compliant with $G_{n,1}$. However, by Theorem~\ref{thm:main1}, the path digraph is {\em not} structural controllable for linear ensemble systems because the subgraph $H$ induced by the $\alpha$-nodes does not admit a Hamiltonian decomposition. This issue can be resolved by adding, e.g., the edge $\alpha_n\alpha_1$ so that the resulting subgraph $H$ is a Hamiltonian cycle.}\label{fig:comparison}
\end{center}
\end{figure} 
}
\end{remark}

If $G$ has several connected components, then each component corresponds to a sparse linear ensemble system. The dynamics of these ensemble systems are decoupled from each other. It follows that $G$ is structurally controllable if and only if every connected component of $G$ satisfies condition-A.

As was mentioned earlier, there are variations on controllability notions for linear ensemble systems: One can replace uniform controllability with  $\mathrm{L}^p$-controllability, for $1\leq p < \infty$, which is known to be weaker. Correspondingly, one relaxes Def.~\ref{def:StrucContr} as follows: A digraph $G\in \mathcal{G}$ is structural controllability if there exists an $\mathrm{L}^p$-controllable pair compliant with $G$.      
With such relaxation, one may wonder whether condition-A is still necessary and sufficient? The answer is affirmative and, in fact, the proof of this result can be obtained, with slight modification, from the proof of Theorem~\ref{thm:main1} given in the next section. We elaborate on the above arguments in the Appendix.

\subsection{Characterization of Minimal Digraphs}\label{ssec:minimal}
In this subsection, we focus on a special class of structurally controllable digraphs, namely, digraphs with minimal numbers of edges. These digraphs corresponds to the sparsity patterns with minimal numbers of $\star$-entries. To that end, we have the following definition: 

\begin{definition}
	A structurally controllable digraph $G\in \mathcal{G}$ is \textbf{minimal} if removal of any edge out of $G$ causes the digraph to lose structural controllability.    
\end{definition}

We provide below a complete characterization of minimally structurally controllable digraphs. For that, we need a few preliminaries, and start with the following definition:

\begin{definition}\label{def:strongcompdecomp} 
Let $G = (V, E)$ be an arbitrary weakly connected digraph. The \textbf{strong component decomposition} $V = \sqcup^N_{i = 0} V_i$ satisfies the following conditions:  
\begin{enumerate}
\item Let $G_i$ be the subgraph of $G$ induced by $V_i$. Then, every $G_i$ is strongly connected.
\item If $G'$ is another induced subgraph of $G$ and is strongly connected, then $G'$ has to be a subgraph of $G_i$ for some $i \in \{0,\ldots, N\}$. 
\end{enumerate}
\end{definition}

We count the number of strong components from $0$ because, later, we will use $G_0$ to denote the singleton formed by the unique control-node of a digraph $G\in \mathcal{G}_{n,1}$. It will thus distinguish itself from others $G_1,\ldots, G_N$, which are formed by state-nodes. 

The strong component decomposition exists and is unique (see, for example,~\cite{chen2015controllability}). By condensing these strong components into  single nodes, one obtains a simple digraph $S$:

\begin{definition}
Let  $V = \sqcup_{i = 0}^N V_i$ be the strong component decomposition of $G$. 
The \textbf{skeleton digraph} $S$ of $G$ is defined as follows: 
There are $(N + 1)$ nodes $w_0,\ldots,w_N$ in $S$, corresponding to the $(N + 1)$ strong components. The digraph $S$ does not have self-arcs. For two different nodes $w_i$ and $w_j$, there is an edge $w_iw_j$ if and only if there is an edge $v_i v_j$ in $G$ with $v_i\in V_i$  and $v_j \in V_j$. 
\end{definition}

It should be clear that $S$ is acyclic.  
To every edge $w_i w_j$ of $S$, we define a subset of edges of $G$ as follows:
\begin{equation}\label{eq:squarewiwj}
[w_i w_j] := \{v_i v_j \in E \mid v_i\in V_i \mbox{ and } v_j \in V_j\},
\end{equation}    
i.e., $[w_iw_j]$ is the collection of edges from $G_i$ to $G_j$. 
By the construction of skeleton digraph, the set $[w_i w_j]$ is nonempty. 

We now apply condensation to the digraphs $G\in\mathcal{G}$ and obtain their skeleton digraphs~$S$. 
Note that each control-node $\beta_i$ of $G$ is itself a strongly connected component and, hence, gives rise to a node of the skeleton digraph $S$.  
Other strongly connected components $G_j$ of~$G$ are all contained in the subgraph $H$ induced by the state-nodes.   

Also, note that if $G$ has only one control-node and if $G$ is structurally controllable, then by Theorem~\ref{thm:main1}, the skeleton digraph $S$ is rooted acyclic. The unique root of $S$ corresponds to the control-node of $G$. 

We further recall that an arborescence is a directed rooted tree. 
  With the above preliminaries, we now have the second main result that characterizes all minimally structurally controllable digraphs:

\begin{theorem}\label{thm:minimal}
	A weakly connected digraph $G\in\mathcal{G}$ is minimally structurally controllable if and only if the following hold: 
	\begin{enumerate}
		\item[B1.] There is only one control-node $\beta$. The skeleton digraph $S$ of $G$ is an arborescence. Moreover, for every edge $w_iw_j$ of $S$, the set $[w_iw_j]$ defined in~\eqref{eq:squarewiwj} is a singleton. 
		\item[B2.] Let $G_0,\ldots, G_N$ be the subgraphs of $G$ obtained from the strong component decomposition, with $G_0$ the singleton $\{\beta\}$. Then, every $G_i$, for $i = 1,\ldots, N$, is a cycle.   
	\end{enumerate}
\end{theorem}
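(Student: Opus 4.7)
The plan is to prove both directions of the equivalence. Sufficiency is a direct application of Theorem~\ref{thm:main1} together with a case analysis on edge types. For necessity, I argue by contrapositive in four stages: in each, I locate a removable edge whenever (B1) or (B2) is violated, contradicting minimality.

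For sufficiency, assume (B1) and (B2). Condition (A1) holds because $S$ is an arborescence rooted at $\beta$, so every strong component is reachable from $\beta$; within each $G_i$, being a cycle (hence strongly connected) propagates reachability to all of $V_i$. Condition (A2) holds because the cycles $G_1,\ldots,G_N$ disjointly cover all state-nodes. For minimality, by (B1) every edge of $G$ is either inter-component (the unique realizer of some skeleton edge) or intra-component (within a cycle $G_i$). Removing an inter-component edge disconnects the arborescence and renders some state-subtree unreachable from $\beta$, violating (A1). Removing an intra-component edge destroys the unique cycle through $V_i$ in $H$; since $S$ is acyclic no cycle of $H$ can cross strong components, so $V_i$ admits no cycle cover and (A2) fails.

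For necessity, assume $G$ is weakly connected and minimal. \emph{Stage I}: if there are $m\ge 2$ control-nodes, then $S$ has exactly $m$ sources (no other sources exist by (A1)), and any weakly connected DAG with $\ge 2$ sources must contain a node $w$ of in-degree $\ge 2$ (if all in-degrees were $\le 1$, the edge count would be $\le |V(S)|-2 < |V(S)|-1$, contradicting weak connectivity). Picking two parents $w_1\neq w_2$ of $w$ and any $e\in[w_1 w]$, removing $e$ preserves (A1) since $w$ and its descendants remain reachable via $w_2$, and preserves (A2) since $e$, being inter-component, lies in no cycle of $H$. \emph{Stage II}: With $m=1$, if the skeleton is not an arborescence, some non-root has two parents and the same argument applies. \emph{Stage III}: If $[w_iw_j]$ has $\ge 2$ elements, removing one leaves $S$ and all intra-component structure unchanged, so both (A1) and (A2) persist. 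These three stages together yield (B1).

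\emph{Stage IV}: Assuming (B1), suppose $G_i$ for some $i\ge 1$ is not a cycle. Since a strongly connected digraph on $|V_i|$ vertices with exactly $|V_i|$ edges is necessarily a single cycle, we have $|E(G_i)|>|V_i|$. Fix a Hamiltonian decomposition $\pi$ of $V_i$, which exists by (A2). If $\pi$ is a single cycle on $V_i$, any edge outside $\pi$ is removable: $\pi$ still witnesses (A2), and traversing $\pi$ from the entry point $v_i^{\mathrm{in}}$ reaches all of $V_i$, so (A1) holds. If $\pi$ consists of $k\ge 2$ cycles, contract each cycle into a super-node to form a strongly connected super-graph $\tilde G_i$ with $\ge k$ super-edges; a spanning arborescence of $\tilde G_i$ rooted at $v_i^{\mathrm{in}}$'s super-node uses only $k-1$ super-edges, so a non-tree super-edge exists. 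Removing a corresponding inter-cycle edge of $G_i$ preserves reachability of every super-node via the arborescence and intra-cycle reachability via $\pi$, hence (A1); $\pi$ continues to witness (A2). The main obstacle is this final stage: an intra-component edge removal can simultaneously disrupt (A1) and (A2), so the removable edge must be chosen relative to a fixed decomposition $\pi$, and the multi-cycle case requires the super-graph reduction to guarantee the existence of a non-tree inter-cycle edge.
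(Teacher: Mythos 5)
Your proof is correct, and while the sufficiency half tracks the paper closely (it is essentially Lemmas~\ref{lem:KsubsetG} and~\ref{lem:googl}: cycles give the unique Hamiltonian decomposition, singleton edge-sets on an arborescent skeleton make every inter-component edge a cut edge for accessibility), your necessity half takes a genuinely different route. The paper reduces necessity to the global edge-reduction result of Lemma~\ref{lem:ba} inside Prop.~\ref{prop:redminimal}: starting from any digraph satisfying condition-A, it partitions the node set by which control node first reaches each state-node, fixes a Hamiltonian decomposition, condenses the cycles, selects a spanning arborescence of the condensation, and deletes \emph{all} non-selected edges at once to produce a spanning subgraph in $\mathcal{K}$; minimality then forces $G$ to equal that subgraph. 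You instead argue by contrapositive and exhibit a \emph{single} superfluous edge for each way condition-B can fail: a source-counting argument on the skeleton DAG handles multiple control nodes and non-arborescent skeletons, multiplicities in $[w_iw_j]$ are immediate, and a non-cycle strong component is handled by contracting the cycles of a fixed Hamiltonian decomposition and comparing the number of inter-cycle edges against a spanning arborescence of the contracted graph. Your local argument is somewhat more delicate in Stage IV (the paper's global deletion sidesteps the need to certify that one particular edge removal preserves both (A1) and (A2) simultaneously), but it is more economical in that it never needs the full multi-control-node partition of Lemma~\ref{lem:ba}, since weak connectivity plus your source count already forces $m=1$. One pedantic point worth a sentence in Stage IV: a singleton strong component without a self-arc is strongly connected but not a cycle and has $0<|V_i|$ edges, so your inequality $|E(G_i)|>|V_i|$ needs the observation that (A2) rules this case out (such a node cannot lie on any cycle of $H$); with that remark the argument is complete.
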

 
\begin{definition} 
The two items B1 and B2 combined will be referred to as \textbf{\textit{condition-B}}. 
\end{definition}
 
For illustration, we provide in Fig.~\ref{fig:niceexample} all weakly connected, minimally structurally controllable digraphs $G$ with three state-nodes (the number of control-nodes is necessarily one).   

\begin{figure}[ht]
\begin{center}
\includegraphics[width = 0.6\textwidth]{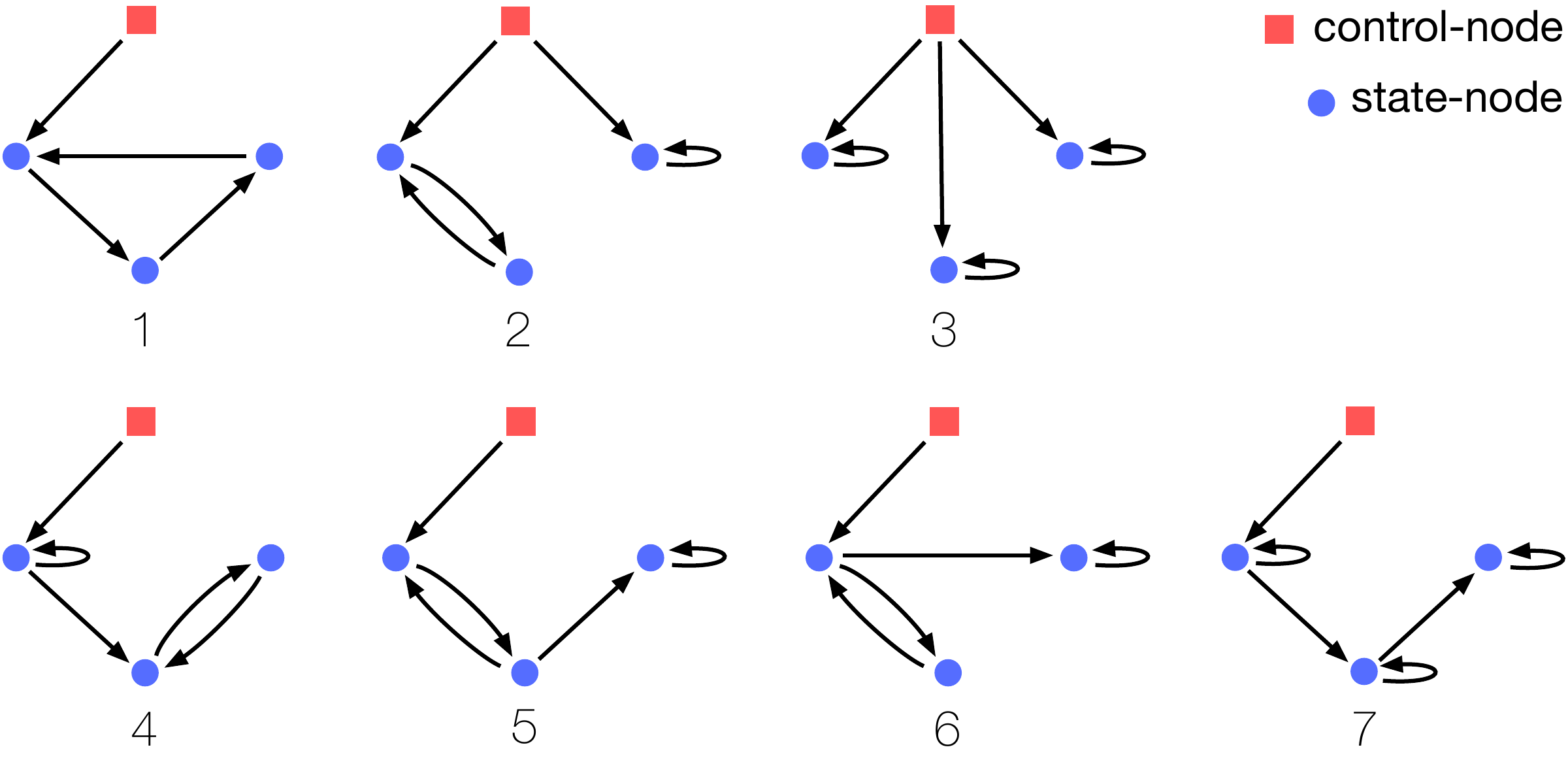}
\caption{\small We enumerate in the figure all seven weakly connected, minimally structurally controllable digraphs with three state-nodes. The number of control-nodes is necessarily one.}\label{fig:niceexample}
\end{center}
\end{figure}

\section{Analysis and Proofs of Main Results}\label{sec:analysis}
This section is devoted to the proofs of the two main results, Theorems~\ref{thm:main1} and~\ref{thm:minimal}, formulated in the previous section.    
The analysis comprises three parts: 
\begin{enumerate}
\item 	In Section~\ref{ssec:necessity}, we show that condition-A is necessary for structural controllability. This part is more or less straightforward. 
\item In Section~\ref{ssec:minimalB}, we show that condition-B is minimal with respect to condition-A, i.e., every digraph satisfying condition-A can be reduced, via edge deletion, to a disjoint union of digraphs satisfying condition-B. In the same subsection, 
we  also recall the fact that the property of being structural controllable is monotone with respect to edge-set inclusion. Thus, to establish sufficiency of condition-A, it suffices to establish sufficiency of condition-B. 
\item In Section~\ref{ssec:matrixform}, we represent minimal sparsity patterns in matrix forms. This prepares for explicit constructions of uniformly controllable pairs $(A, B)$, which will be carried out in Section~\ref{ssec:sufficiency}.   
\end{enumerate}

\subsection{Necessity of Condition-A}\label{ssec:necessity}
In this subsection, we establish the following result: 

\begin{proposition}\label{prop:necessary}
If $G$ is structurally controllable, then $G$ satisfies condition-A given in the statement of Theorem~\ref{thm:main1}. 
\end{proposition}

\begin{proof}
We need to show that $G$ is accessible to control-nodes and that the subgraph $H$ induced by the state-nodes admits a Hamiltonian decomposition. 
\vspace{.1cm}
 	
\noindent
{\em Proof that $G$ is accessible.}  
Recall that $V_\alpha$ is the set of state-nodes in $G$. Suppose, to the contrary, that $G$ is not accessible; then, we can partition the set $V_\alpha$ into two nonempty subsets: $V_\alpha = V^+_\alpha \sqcup V^-_\alpha$. The subset $V^+_\alpha$ is the collection of nodes to which there exist paths from the control-nodes and $V^-_\alpha:= V_\alpha \backslash V^+_\alpha$ is the complement of~$V^+_\alpha$ in $V_\alpha$.  
Let $k:= |V^-_\alpha|$; then, $1\le k \le n$. By relabeling the nodes, if necessary, we can assume that $V^-_\alpha$ comprises the last $k$ nodes $\alpha_{n - k +1},\ldots,\alpha_n$. 

We next pick an arbitrary pair $(A, B)\in \mathbb{V}(G)$, and partition $A$ and $B$ into blocks: $A = [A_{11}, A_{12}; A_{21}, A_{22}]$ and $B = [B_1; B_2]$, 
where $A_{11}$ is $k\times k$ and $B_1$ is $k\times m$. By construction of $V^-_\alpha$ and $V^+_\alpha$, we have that the blocks $A_{21}$ and $B_2$ are zeros. Thus, the corresponding ensemble system is in the Kalman canonical form:
\begin{equation*}\label{eq:kalmancf}
	\begin{bmatrix}
		\dot x_1(t,\sigma) \\
		\dot x_2(t,\sigma)
	\end{bmatrix} = 
	\begin{bmatrix}
 A_{11}(\sigma) & A_{12}(\sigma) \\
 0 & A_{22}(\sigma) 	
\end{bmatrix}
\begin{bmatrix}
		x_1(t,\sigma) \\
		x_2(t,\sigma)
	\end{bmatrix}  +  
	\begin{bmatrix} 
		B_1(\sigma) \\
		0 
	\end{bmatrix} u(t), \quad\forall \sigma\in \Sigma.
\end{equation*}

We claim that the pair $(A, B)$ is not uniformly controllable. 
To see this, let $f\in \mathcal{L}(A, B)$, and we decompose $f = [f_1; f_2]$ with $f_2$ of dimension~$k$. Then, $f_2 = 0$, and the claim follows from Lemma~\ref{lem:controllablesubspace}. 
\vspace{.1cm}

\noindent
{\em Proof that $H$ admits a Hamiltonian decomposition.} 
To proceed, we first recall the following necessary condition~\cite[Lemma 1]{helmke2014uniform} for a continuous matrix pair $(A,B)$ to be uniformly controllable: If $(A, B)$ is uniformly controllable and if $B$ has $m$ columns, then for any finite number $q\geq (m + 1)$ of distinct points $\sigma_1,\ldots,\sigma_{q}$ in $\Sigma$, we have that   
\begin{equation}\label{eq:intersectionofeigensets}
\eig(A(\sigma_1))\cap \cdots \cap \eig(A(\sigma_q)) = \varnothing,  
\end{equation}  
where $\eig(A(\sigma_i))$ is the set of eigenvalues of $A(\sigma_i)$. 

Next, recall that $H$ is the subgraph of $G$ induced by the state-nodes. Thus, the sparsity pattern of matrix $A$ is determined by $H$. We then make the following observation 	(adapted from~\cite{belabbas2013sparse}): 
If the graph $H$ does not admit a Hamiltonian decomposition, then for any pair $(A, B)\in \mathbb{V}(G)$, the determinant of $A$ is identically zero, i.e., 
\begin{equation*}\label{eq:determinantiszero}
\det A(\sigma) = 0, \quad \forall \sigma\in \Sigma.
\end{equation*} 

It then follows that for any $m\ge 0$ and for any $q\ge (m + 1)$ distinct points $\sigma_1,\ldots,\sigma_q$ in $\Sigma$, we have that
$$
0\in \eig(A(\sigma_1))\cap \cdots \cap \eig(A(\sigma_q)), 
$$ 
which violates the necessary condition given in~\eqref{eq:intersectionofeigensets}. Thus, we conclude that if there exists a uniformly controllable pair $(A, B)$ in $\V(G)$, then the subgraph $H$ has to admit a Hamiltonian decomposition.   
\end{proof}

\subsection{Minimality of Condition-B}\label{ssec:minimalB}
In this subsection, we show that the digraphs $G\in\mathcal{G}$ that are weakly connected and minimal with respect to condition-A are the ones satisfying condition-B. To proceed, we introduce, for each $n \ge 0$, a set of digraphs as follows: 
\begin{equation*}\label{eq:defkn}
\mathcal{K}_{n}:= \{G\in \mathcal{G}_{n,1} \mid \mbox{$G$ satisfies condition-B}\}.
\end{equation*}
We then let 
\begin{equation*}\label{eq:defk}
\mathcal{K}: = \cup^\infty_{n = 0} \mathcal{K}_n.
\end{equation*} 
For clarity of presentation, we will now use letter $K$ to denote a digraph in $\mathcal{K}$ for the remainder of the section. We establish below the following result: 

\begin{proposition}\label{prop:redminimal}
	A digraph $G\in\mathcal{G}$ is minimal with respect to condition-A  if and only if it is a disjoint union of $K_i$ where each $K_i$ belongs to $\mathcal{K}$.  
\end{proposition}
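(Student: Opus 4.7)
The plan is to prove both directions by analyzing each weakly connected component of $G$ separately and by distinguishing between edges \emph{internal} to a state-SCC and edges \emph{between} SCCs of state-nodes. The key observation used throughout is that any edge between two distinct state-SCCs cannot lie in any Hamiltonian decomposition of $H$: a cycle in a Hamiltonian decomposition is strongly connected and is therefore confined to a single SCC. Thus removing such an inter-SCC edge automatically preserves (A2), and for minimality one only needs to check whether (A1) fails after removal.

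For the $(\Leftarrow)$ direction, take $G = \sqcup_i K_i$ with each $K_i \in \mathcal{K}$. Condition-A holds because each $K_i$'s arborescence skeleton yields (A1) and its cycle-SCCs (from (B2)) yield (A2). For minimality, let $e$ be any edge of some $K_i$: if $e$ is internal to a cycle-SCC, then removing it destroys the unique cycle on those state-nodes, and since no other cycle on that vertex set exists in $H$, (A2) breaks; if $e$ realizes a singleton edge-set $[w_a w_b]$ of the skeleton arborescence, then removing $e$ severs the unique directed route from $\beta$ into the subtree below $w_b$, so every state-node there loses accessibility and (A1) breaks.

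For the $(\Rightarrow)$ direction, assume $G$ is minimal and take any weakly connected component $G'$. I would establish four structural claims. (i) $G'$ has exactly one control-node: if it had two or more, the skeleton $S$ would be a weakly connected acyclic digraph with at least two roots; since a rooted acyclic digraph in which every non-root has in-degree at most one splits as a forest of arborescences, weak connectivity forces some non-root super-node of $S$ to admit at least two incoming super-edges, and any edge of $G'$ realizing one of these super-edges can then be removed, with access to the super-node preserved through the remaining super-edge, contradicting minimality. (ii) The skeleton of $G'$ is an arborescence rooted at $w_0 = \{\beta\}$: the same argument excludes any non-root super-node with two or more in-super-edges. (iii) Each $[w_i w_j]$ is a singleton: two parallel underlying edges would allow one to be removed while the other, landing at some node of the strongly connected $w_j$, still supplies full access to $w_j$. (iv) Each state-SCC $G_i$ is a cycle: by (iii), $G_i$ is entered at the single node $\alpha^*$ delivered by the unique parent-to-$G_i$ edge, and $\alpha^*$ lies in one Hamiltonian sub-cycle $C \subseteq G_i$; contracting each Hamiltonian sub-cycle within $G_i$ to a super-node produces a strongly connected multigraph $Q_i$ on $m$ super-nodes, and if $m \ge 2$ then $|E(Q_i)| \ge m$ while a spanning arborescence of $Q_i$ rooted at the super-node of $\alpha^*$ uses only $m-1$ edges, so at least one super-edge of $Q_i$, and hence at least one underlying edge of $G_i$, is superfluous for forward reachability from $\alpha^*$ and can be removed; if $m = 1$ but $G_i$ contains an edge outside the cycle $C$, that chord short-circuits $C$ and is likewise superfluous.

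The main obstacle will be the careful verification in claim (iv) that the identified superfluous edge truly preserves (A1) after removal. The argument relies on translating a spanning arborescence of $Q_i$ rooted at the super-node of $\alpha^*$ into actual paths in $G_i$ from $\alpha^*$ to every state-node of $G_i$, which uses crucially that each Hamiltonian sub-cycle is strongly connected: entering a sub-cycle at any node suffices to reach every node of it. Once claims (i)--(iv) are established, each weakly connected component of $G$ is identified as an element of $\mathcal{K}$ and the proposition follows.
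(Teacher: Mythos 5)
Your proof is correct, and while your $(\Leftarrow)$ direction coincides with the paper's (Lemmas~\ref{lem:KsubsetG} and~\ref{lem:googl}: condition-A holds on each $K_i$, the Hamiltonian decomposition of a union of cycle-SCCs is unique so intra-cycle edges are indispensable for (A2), and singleton skeleton edges are indispensable for (A1)), your $(\Rightarrow)$ direction takes a genuinely different route. The paper proves Lemma~\ref{lem:ba}: \emph{every} digraph satisfying condition-A admits a condition-A-preserving edge reduction to a spanning disjoint union of elements of $\mathcal{K}$ (built constructively by partitioning the nodes by accessibility to each control node, condensing a Hamiltonian decomposition, and keeping one edge per arc of a spanning arborescence of the condensation); minimality then forces $G$ to coincide with that reduction. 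You instead argue by contradiction directly on a minimal $G$: any failure of (B1) or (B2) --- a second control node, a super-node of in-degree two, a non-singleton $[w_iw_j]$, or a state-SCC that is not a single cycle --- produces a concrete removable edge, with your counting argument on the contracted multigraph $Q_i$ handling the last case. Your route is more direct for the proposition itself; the paper's route has the side benefit that Lemma~\ref{lem:ba} is reused verbatim to establish sufficiency of condition-A, though your characterization recovers the same corollary by greedily deleting edges until a minimal spanning subgraph is reached. One point you should spell out in claims (i)--(iii): when you delete one of two parallel access routes into a super-node $w$ and claim access survives through the other in-neighbor $w_b$, you need $w_b$ itself to remain accessible after the deletion; this holds because a path to $w_b$ passing through the deleted edge would enter $w$ and later reach $w_b$, creating a cycle $w_b \to w \to \cdots \to w_b$ in the skeleton, contradicting its acyclicity. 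This is a detail, not a gap, but the rerouting arguments silently rely on it.
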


We illustrate in Fig.~\ref{fig:condensation} edge-reductions of the digraph in Fig.~\ref{fig:sparsepair} into disjoint unions of digraphs in $\K$.  

\begin{figure}[ht]
\begin{center}
\includegraphics[width = 0.6\textwidth]{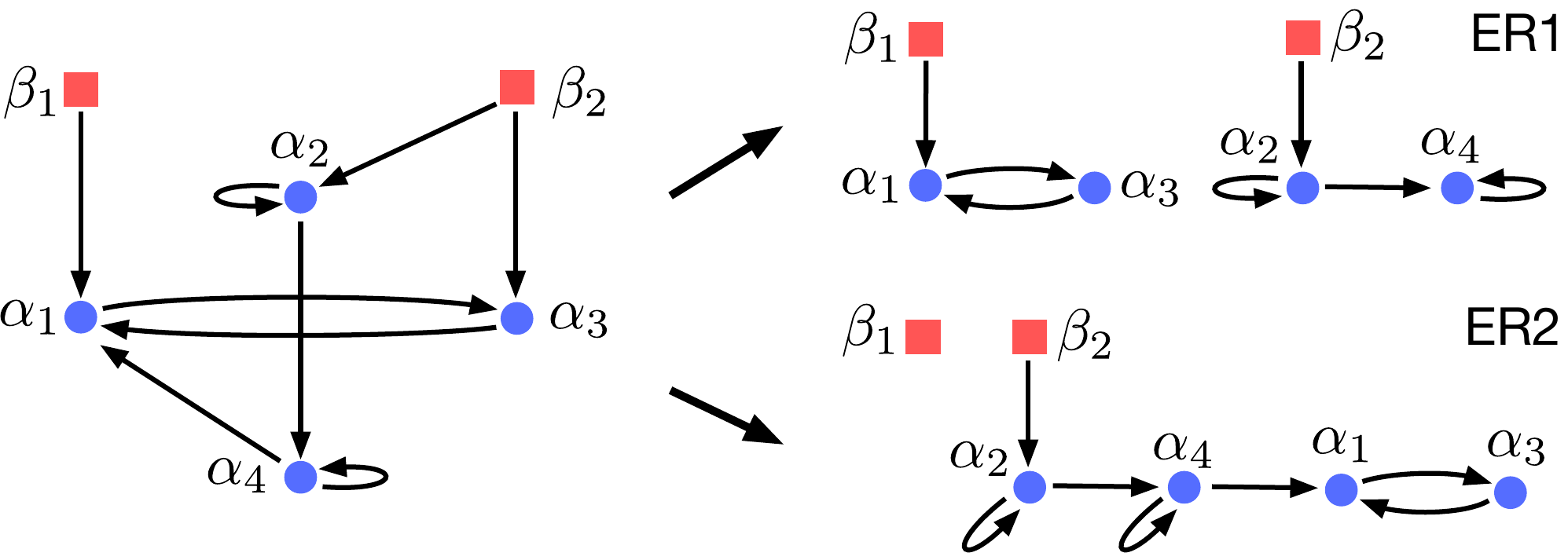}
\caption{\small The digraph on the left is from Fig.~\ref{fig:sparsepair}. It satisfies condition-A. We give two different edge-reductions (ERs) of the digraph and obtain disjoin unions of digraphs in $\mathcal{K}$ on the right. For ER1, we remove edges $\beta_2\alpha_3$ and $\alpha_4\alpha_1$ from the left. After the reduction, the two disjoint digraphs belong to $\K_2$. 
For ER2, we remove edges $\beta_1\alpha_1$ and $\beta_2\alpha_3$. After the reduction, the two disjoint digraphs belong to $\K_0$ and $\K_4$, respectively. 
}\label{fig:condensation}
\end{center}
\end{figure}

Prop.~\ref{prop:redminimal} will be established after a sequence of lemmas. We will first show that the digraphs in $\K$ satisfy condition-A and, next, show that these digraphs are minimal with respect to condition-A. They are done in Lemmas~\ref{lem:KsubsetG} and~\ref{lem:googl}, respectively.  
After that, we show that every digraph $G\in \mathcal{G}$, minimal with respect to condition-A, is a disjoint union of the digraphs in~$\mathcal{K}$. This is done in Lemma~\ref{lem:ba}.   

We start with the following lemma:

\begin{lemma}\label{lem:KsubsetG}
Every digraph $K\in \mathcal{K}$ satisfies condition-A.   	
\end{lemma}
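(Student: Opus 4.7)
My plan is to verify (A1) and (A2) directly from the definitions comprising condition-B, treating them independently. The degenerate case $n = 0$ is vacuous (condition-A has no content when there are no state-nodes), so I assume $n \ge 1$ throughout.

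For (A1), I would first identify the root of the arborescence $S$. By the definition of $\mathcal{G}_{n,m}$, the control-node $\beta$ has no in-neighbors in $K$, so the corresponding skeleton node $w_0$ has no in-neighbors in $S$; in an arborescence the unique such node is the root, hence $w_0$ is the root. Consequently, for every other skeleton node $w_i$ there is a (unique) directed path $w_0 w_{j_1} \cdots w_{j_k} = w_i$ in $S$. I would then lift this path to a walk in $K$ by choosing, for each skeleton edge $w_j w_k$ along it, the unique edge in $[w_j w_k]$ supplied by (B1), and splicing these chosen edges together by traversing the intermediate strong components $G_{j_\ell}$ (each a cycle, hence strongly connected). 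Using strong connectivity of $G_i$ one can extend this walk to reach any state-node of $V_i$, which establishes accessibility of $K$.

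For (A2), let $H$ be the subgraph of $K$ induced by the state-nodes. By the strong component decomposition, the vertex sets $V_1, \ldots, V_N$ partition the set of state-nodes, since the only component containing $\beta$ is $G_0 = \{\beta\}$. By (B2), each $G_i$ with $i \ge 1$ is a cycle and, being supported on state-nodes only, is a subgraph of $H$. The disjoint union $G_1 \sqcup \cdots \sqcup G_N$ is therefore a spanning subgraph of $H$ that is itself a disjoint union of cycles, which is exactly a Hamiltonian decomposition of $H$. The argument is in essence a direct unpacking of the definitions and I do not foresee any real obstacle; the two points that warrant care are (i) using the absence of in-neighbors at $\beta$ to pin down $w_0$ as the root of the arborescence $S$, and (ii) observing that the strong-component cycles granted by (B2) automatically cover all state-nodes because the strong components partition the vertex set.
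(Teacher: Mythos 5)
Your proof is correct and follows essentially the same route as the paper's: accessibility is read off from the rootedness of the arborescence $S$ (with $w_0 = \{\beta\}$ as root), and the Hamiltonian decomposition of $H$ is exactly the collection of strong-component cycles $G_1,\ldots,G_N$ guaranteed by (B2), which partition the state-nodes. Your write-up merely spells out two steps the paper leaves implicit (identifying $w_0$ as the root via the absence of in-neighbors at $\beta$, and lifting the skeleton path to a walk in $K$ through the strongly connected components), which is a harmless elaboration rather than a different argument.
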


\begin{proof}
Let $S$ be the skeleton digraph of $K$. Then, $S$ is rooted by item B1 of Theorem~\ref{thm:minimal}. It follows that $K$ is rooted with the control-node $\beta$ being the single root. 
In particular, $K$ is accessible to the root~$\beta$.   
Next, we let $H$ be the subgraph of $K$ induced by the state-nodes. We need to show that $H$ admits a Hamiltonian decomposition. But, this follows from item B2 of Theorem~\ref{thm:minimal}. To see this, let $K_0, K_1,\ldots, K_N$ be the subgraphs of $K$ obtained from the strong component decomposition, with $K_0$ being the singleton $\{\beta\}$. Note that all the other $K_i$, for $i = 1,\ldots, N$, are cycles. Moreover, they are subgraphs of $H$ and form a Hamiltonian decomposition of~$H$.  
\end{proof}

We next have the following fact:

\begin{lemma}\label{lem:googl}
Every digraph $K\in \mathcal{K}$ is minimal 	with respect to condition-A.
\end{lemma}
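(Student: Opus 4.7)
The plan is to show that every edge of $K$ is essential for condition-A, by splitting edges into two types according to item (B1): intra-component edges (edges inside some cycle $K_i$, $i\ge 1$) and inter-component edges (edges across two distinct strong components, which by (B1) appear as singletons $[w_iw_j]$). I will treat the two types separately, showing that removing an inter-component edge destroys accessibility (A1), while removing an intra-component edge destroys the Hamiltonian decomposition (A2).

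First I would handle the inter-component case. Let $e$ be the unique edge lying in some $[w_iw_j]$, and let $K'=K-e$. Because $e$ is the only edge of $K$ projecting to $w_iw_j$, the skeleton of $K'$ is obtained from $S$ by deleting the edge $w_iw_j$. Since $S$ is an arborescence rooted at $w_0=\{\beta\}$, deletion of any edge separates the subtree rooted at $w_j$ from $w_0$, so no walk in the new skeleton reaches $w_j$. Lifting this back to $K'$: any path from $\beta$ to a state-node $\alpha$ in $K_j$ would project to a walk in the new skeleton from $w_0$ to $w_j$, which no longer exists. Hence accessibility fails.

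Next I would handle the intra-component case. Suppose $e$ is an edge of the cycle $K_i$ (with $i\ge 1$), let $H$ be the state-node subgraph of $K$, and let $H'=H-e$. I claim the only Hamiltonian decomposition of $H$ is $\{K_1,\ldots,K_N\}$, so removing any edge from any $K_i$ destroys it. To prove this, take any cycle $C\subseteq H$ and project it into the skeleton $S$; the projection is a closed walk in $S$, but $S$ is acyclic (an arborescence), so the projection must be trivial, meaning $C$ lies entirely inside some strong component $K_\ell$. Because $K_\ell$ is itself a simple cycle, the only cycle it contains is $K_\ell$ itself. Thus, in $H$, the only available cycles are the $K_\ell$, and a Hamiltonian decomposition must be exactly $\{K_1,\ldots,K_N\}$. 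After removing $e$, the cycle $K_i$ is broken into a path and no cycle in $H'$ covers the nodes of $K_i$, so no Hamiltonian decomposition exists.

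The main obstacle I anticipate is the uniqueness-of-decomposition argument in the second case; the key idea there is the projection-to-skeleton trick combined with the observation that a simple cycle has no proper cycle subgraph. Once both cases are established, every edge of $K$ is essential to condition-A, and the lemma follows.
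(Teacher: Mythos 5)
Your proof is correct and follows essentially the same two-case structure as the paper's: removing an inter-component edge disconnects the arborescence skeleton and kills accessibility (A1), while removing an intra-component edge destroys the unique Hamiltonian decomposition $\{K_1,\ldots,K_N\}$ and kills (A2). Your projection-to-skeleton argument for uniqueness of the decomposition is just a more explicit rendering of the paper's appeal to the fact that any strongly connected (in particular, any cycle) subgraph must lie inside a single strong component.
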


\begin{proof}
We show that removal of any edge out of $K$ violates condition-A. We again let $K_0,\ldots, K_N$ be the strong components of $K$ obtained from the strong components decomposition, with $K_0$ being the singleton of control-node $\beta$. We now remove an edge out of $K$. There are two cases for the edge: (1) it belongs to a certain component $K_i$ for some $i \in \{1,\ldots, N\}$, or,  (2) it connects two different components.   

We first deal with case (1). 
Note that by item 2 of Def.~\ref{def:strongcompdecomp}, each strong component $K_i$ cannot be contained in any strongly connected subgraph of $K$ other than itself. 
Thus, the subgraph $H$ induced by the state-nodes admits a Hamiltonian decomposition if and only if each $K_i$ admits a Hamiltonian decomposition.   
Since every $K_i$, for $i = 1,\ldots, N$, is a cycle by item~B2, the Hamiltonian decomposition of $H$ is unique, given by the union of these $K_i$. 
It follows that removing an edge out of one of these cycles violates item A2.   

We now deal with case (2). 
Let $S$ be the skeleton digraph of $K$ and $w_0,\ldots, w_N$ be the nodes of $S$. By item~B1, the skeleton digraph $S$ is an arborescence. Thus, if we remove an edge $w_iw_j$ out of $S$, then $S$ is disconnected. Correspondingly, if we remove all the edges in the set $[w_iw_j]$ (defined in~\eqref{eq:squarewiwj}) out of $K$, then $K$ will be disconnected and, hence, is not accessible anymore. 
Finally, note that by the same item B1, $[w_iw_j]$ contains only one single edge, so removing the edge out of $K$ will violate item A1.     
\end{proof}

To establish Prop.~\ref{prop:redminimal},  it remains to prove the following result: 

\begin{lemma}\label{lem:ba}
	Let $G = (V, E)\in\mathcal{G}$ satisfy condition-A. Then, there exist subgraphs $K_i = (V_i, E_i)$, for $i = 1,\ldots, m$, of $G$ such that every $K_i$ belongs to $\mathcal{K}$ and 
	$V = \sqcup^m_{i =1} V_i$.
\end{lemma}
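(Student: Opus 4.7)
The plan is to use (A2) to carve $G$ into vertex-disjoint cycles covering all state-nodes, then use (A1) to organize those cycles into arborescences rooted at control-nodes, one per resulting $K_i$. Concretely, I would first invoke (A2) to obtain a Hamiltonian decomposition of the state-node subgraph $H$ as disjoint cycles $C_1,\ldots,C_r$ (length-one cycles, i.e., self-arcs, are permitted). These cycles are the candidates for the non-root strong components appearing in each $K_i$.

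Next, I would use (A1) to assign each cycle a unique ``parent.'' For each $C_j$, accessibility yields a path in $G$ from some control-node to some vertex of $C_j$; let $d(C_j)$ denote the minimum length of such a path and fix a shortest witnessing path. Its penultimate vertex $p(C_j)$ then lies either at a control-node or in some other cycle $C_{j'}$ with $d(C_{j'})<d(C_j)$, since otherwise a shorter prefix would furnish a shorter path. Declaring the parent of $C_j$ to be that control-node or cycle respectively, the strict decrease of $d$ along parent edges makes the relation acyclic, and tracing parents from any $C_j$ terminates at a control-node. Regarding the cycles $C_1,\ldots,C_r$ and control-nodes $\beta_1,\ldots,\beta_m$ as meta-nodes and drawing the parent edges then gives a forest $F$ whose roots are the control-nodes acting as parents; any control-node not acting as a parent is adjoined as an isolated singleton. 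For each weakly connected tree $T$ of $F$, with root control-node $\beta$, I would assemble a subgraph $K_T$ of $G$ whose vertex set consists of $\beta$ together with the vertices of the cycles in $T$, and whose edge set consists of (i) the edges prescribed by the Hamiltonian cycles $C_j\in T$ (not all edges of $G$ among those vertices) and (ii) for each non-root $C_j\in T$, the single edge of $G$ from $p(C_j)$ into $C_j$ realizing the shortest witnessing path.

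To finish, I would verify that each $K_T$ lies in $\mathcal{K}$. It contains one control-node by construction. Each $C_j\in T$ is strongly connected as a cycle and is maximal as such in $K_T$, because all inter-cycle edges in $K_T$ point from parent to child in $T$, precluding any return path. Consequently the skeleton digraph of $K_T$ is isomorphic to $T$ and is therefore an arborescence rooted at $\beta$, and each $[w_i w_j]$ set is a singleton by construction, giving (B1) and (B2). Since every state-node lies in a unique $C_j$ and every control-node lies in a unique tree of $F$, the vertex sets $V_T$ partition $V$, as required. The main obstacle I anticipate is the second step, specifically showing that $p(C_j)$ genuinely lies outside $C_j$ and that the parent relation is acyclic; both reduce to a short shortcutting argument using the cycle edges of $C_j$ to contradict minimality of $d$.
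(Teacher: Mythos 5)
Your proposal is correct and follows essentially the same route as the paper: take the Hamiltonian decomposition of the state-node subgraph as the cycles, condense them, and extract a spanning arborescence rooted at the control-nodes, keeping exactly one inter-cycle edge of $G$ per tree edge so that each resulting component satisfies condition-B. The only cosmetic differences are that you build the arborescences directly via a shortest-path (BFS-style) parent assignment and handle all control-nodes in a single forest, whereas the paper first partitions the vertex set by accessibility to each control-node and then invokes the existence of a directed spanning tree of each condensation digraph.
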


\begin{proof}
We first consider the special case where $G$ has a single control-node $\beta$. In this case, $G$ is rooted with $\beta$ the root. 
We show below that $G$ can be reduced to a digraph $K$ in $\mathcal{K}$. 

Let $H$ be the subgraph of $G$ induced by the state-nodes and $H_1,\ldots, H_N$ be a Hamiltonian decomposition of $H$ (so every $H_i$ is a cycle). For convenience, let $H_0:=\{\beta\}$ be the singleton of the control-node. Similar to the strong component decomposition, we build a digraph $S$ by condensing all the $H_i$ to single nodes $w_i$, for $i = 0,\ldots,N$, and by adding edges $w_iw_j$, for $i\neq j$, if there exists at least one edge from $H_i$ to $H_j$. With slight abuse of notation, we will still let $[w_iw_j]$ be the set of edges $v_iv_j$ in $G$ with $v_i$ belonging to $H_i$ and $v_j$ belonging to $H_j$. 

Since $G$ is rooted, the resulting digraph $S$ is also rooted with $w_0$ the unique root. Let $S'$ be a directed spanning tree of~$S$. Given $S'$, we remove edges out of $G$ as follows: If $w_i w_j$ is an edge of $S'$ and if $[w_iw_j]$ has more than one edge, then we keep one edge in the set and remove the others from $G$. If $w_iw_j$ is not an edge of $S'$, then we remove all the edges in $[w_iw_j]$ from $G$. 
We let $K$ be the trimmed subgraph of $G$. Then, it should be clear that $S'$ is the skeleton digraph of $K$. By construction, the digraph $K$ satisfies condition-B. 

We now consider the general case where $G$ has $m$ control-nodes $\beta_1,\ldots,\beta_m$ for $m \ge 1$.   
For each $\beta_i$, we let $V^*_{i}$ be the union of the control-node $\beta_i$ and the set of state-nodes accessible to~$\beta_i$. Since $G$ is accessible, the union of $V^*_{i}$ is the entire node set $V$ of $G$. We next let
$$
V_{i}:= V^*_{i} \backslash \cup^{i-1}_{k = 1} V^*_{k}, \quad \forall i = 1,\ldots, m.
$$
These $V_{i}$ then form a disjoint union of $V$. Note that $V_{i}$ is never empty because it always contains $\beta_i$. However, $V_{i}$ may not contain any state-node. For each $i = 1,\ldots, m$, we let $G_{i}$ be the subgraph of $G$ induced by $V_{i}$. There is only one single control-node, namely $\beta_i$, in~$G_{i}$. 

We show below that every $G_{i}$ satisfies condition-A. Note that if this is the case, then one can apply the edge-reduction to every $G_i$ to obtain a digraph $K_i\in \mathcal{K}$ as was described earlier in the proof.

We first show that every $G_{i}$ is rooted (and, hence, satisfies item A1). Specifically, we show that for any $\alpha_j\in V_{i}$, there is a path from $\beta_i$ to $\alpha_j$ within $G_{i}$. By construction, nodes in $V^*_{i}$ are accessible to $\beta_i$ and $V_{i}$ is a subset of $V^*_{i}$. Thus, there exists a path from $\beta_i$ to $\alpha_j$ in $G$. Clearly, every node along the path belongs to $V^*_{i}$. We show below that all of these nodes (on the path) belong to $V_{i}$, i.e., none of them belongs to $V^*_{k}$ for any $k = 1,\ldots, i-1$. 
This holds because otherwise, the endpoint $\alpha_j$ of the path will be accessible to $V^*_{k}$ for some $k \in \{1,\ldots, i-1\}$, which contradicts the fact that $\alpha_j\in V_{i}$.   

We next show that every $G_{i}$ satisfies item A2. Specifically, we need to show that the subgraph $H_{i}$ of $G_{i}$ induced by the state-nodes admits a Hamiltonian decomposition. Since $G$ satisfies item A2, there are disjoint cycles $H_1,\ldots, H_N$ that cover all the state-nodes. The key observation is that if a node $\alpha_j$ of $G_{i}$ belongs to a cycle $H_\ell$ some $\ell \in \{1,\ldots, N\}$, then all the nodes of the cycle belong to $G_{i}$. To see this, note that if $\alpha_j$ is accessible to $\beta_i$, then so is every node in the cycle $H_\ell$. Conversely, if $\alpha_j$ is not accessible to $\beta_k$, for $k = 1,\ldots,i-1$, then neither is any node in $H_\ell$. The above arguments then imply that all the state-nodes of $G_{i}$ are covered by a certain selection of disjoint cycles $H_{i_1},\ldots, H_{i_{N'}}$. These cycles then form a Hamiltonian decomposition of $H_{i}$.
\end{proof}

Prop.~\ref{prop:redminimal} is now established by Lemmas~\ref{lem:KsubsetG}, \ref{lem:googl}, and~\ref{lem:ba}. 

For the remainder of the section, we will focus only on the digraphs in $\mathcal{K}$.  In particular, we will establish the sufficiency of condition-A by showing that the digraphs in $\K$ are structurally controllable.  We can do this because the  digraphs in $\mathcal{K}$ are minimal with respect to condition-A and, moreover, the property of being structural controllability is monotone with respect to edge-set inclusion:

 \begin{lemma}\label{lem:monotone}
 	Let $G = (V, E)\in\mathcal{G}$ and $G' = (V, E')$ be a subgraph of $G$, with the same node set $V$ and $E'\subseteq E$. If $G'$ is structurally controllable, then so is $G$.
 \end{lemma}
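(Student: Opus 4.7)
The plan is to observe that structural controllability is determined entirely by the existence of some compliant pair $(A,B)$ that yields an $\mathrm{L}^p$-controllable ensemble, and that the class of compliant pairs grows monotonically with the edge set. So the proof reduces to a short set-inclusion argument.

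First, I would unpack the definitions. A pair $(A,B)\in \mathrm{C}^0(\Sigma, \R^{n\times n}\times \R^{n\times m})$ is compliant with $G=(V,E)$ precisely when the digraph $(V,E'')$ induced by the nonzero entries of $(A,B)$ satisfies $E''\subseteq E$. Taking any $(A',B')\in \mathbb{V}(G')$ and letting $E''$ denote its induced edge set, we have $E''\subseteq E'$, and since $E'\subseteq E$ by hypothesis, we get $E''\subseteq E$. Hence $(A',B')\in \mathbb{V}(G)$, which gives the set inclusion $\mathbb{V}(G')\subseteq \mathbb{V}(G)$.

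Next I would invoke the structural controllability definition (Definition~\ref{def:StrucContr}). Since $G'$ is structurally $\mathrm{L}^p$-controllable, there exists $(A',B')\in \mathbb{V}(G')$ such that the ensemble system determined by $(A',B')$ is $\mathrm{L}^p$-controllable. By the inclusion just established, the same pair $(A',B')$ lies in $\mathbb{V}(G)$, so it witnesses structural $\mathrm{L}^p$-controllability of $G$ as well.

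Because the argument is purely set-theoretic at the level of compliant pairs, there is essentially no obstacle: no additional edges need to be activated in $(A',B')$ and no similarity transformation or graph-theoretic surgery is required. The monotonicity is really a tautological consequence of defining compliance by \emph{inclusion} of the induced edge set rather than equality.
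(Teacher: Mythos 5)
Your argument is correct and is exactly the paper's proof, merely spelled out in more detail: the paper likewise notes that any $\mathrm{L}^p$-controllable pair compliant with $G'$ is automatically compliant with $G$, i.e.\ $\mathbb{V}(G')\subseteq\mathbb{V}(G)$. No issues.
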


\begin{proof}
The result directly follows from the fact that if $(A,B)$ is a uniformly controllable pair and is compliant with $G'$, then it is also compliant with $G$.
\end{proof}

\subsection{Sparsity Patterns in Matrix Form}\label{ssec:matrixform}
Let $K\in \K$ and $(A, b)$ be a pair in $\mathbb{V}(K)$. We use little $b$ to indicate the fact that $b$ is a column vector (since $m = 1$). The goal of the subsection is to introduce a matrix form to represent the sparsity pattern of $(A, b)$. This matrix form will be used later for explicit construction of a uniformly controllable pair.   
 
Let $\beta$ be the unique control-node in~$K$ and $K_i = (V_i, E_i)$, for $i = 1,\ldots, N$, be the cycles of $K$. These cycles form a Hamiltonian decomposition of the subgraph induced by the state-nodes. 
  Let $S$ be the skeleton digraph of $K$ and $w_0$ be its root. Let $w_1,\ldots, w_N$ be the nodes corresponding to the cycles $K_1,\ldots, K_N$.

Recall that $S$ is an arborescence and the depth of a node $w_i$ in $S$ is the  length of the unique path from the root $w_0$ to $w_i$. 
The depth of $w_0$ is $0$ by default. 
By relabelling the nodes $w_1,\ldots, w_N$ (and, hence, the cycles $K_1,\ldots, K_N$), if necessary, we can assume that 
\begin{equation}\label{eq:monotoneinc}
	1 = \dep(w_1)  \le \cdots \le \dep(w_N).
\end{equation}

Let $n_i := |V_i|$ and we have that $\sum^N_{i = 1} n_i = n$.  
Without loss of generality, we assume that the first $n_1$ nodes of $K$ belong to $K_1$, the next $n_2$ nodes belong to $K_2$ and, in general,   
\begin{equation}\label{eq:nodeset}
V_i = \left \{\alpha_{s_{i - 1} + 1}, \ldots, \alpha_{s_{i-1} + n_i} \right \}, \quad \forall i = 1,\ldots, N,
\end{equation}
where $s_0:=0$ and $s_i := \sum^{i}_{k = 1} n_k$ for $i \geq 1$.    
Moreover, by relabeling the nodes within each $K_i$, if necessary, we can assume that the edge set $E_i$ of $K_i$ is given by
\begin{equation}\label{eq:edgeset}
E_i = \left \{\alpha_{s_{i - 1} + 1}\alpha_{s_{i - 1} + 2}, \alpha_{s_{i - 1} + 2}\alpha_{s_{i - 1} + 3}, \ldots, \alpha_{s_{i - 1} + n_i}\alpha_{s_{i - 1} + 1} \right \}.
\end{equation}

We now return to the sparse pair $(A,b)$. By the way we label the state-nodes of $K$, we have the following fact:

\begin{lemma}\label{lem:matrixform}
	The pair $(A, b)\in \mathbb{V}(K)$ satisfies the following conditions: 
\begin{enumerate}
\item The matrix $A$ is lower block triangular:  
$$
A = 
\begin{bmatrix}
	A_{11} & 0 & \cdots & 0 \\
	A_{21} & A_{22} & \cdots & 0 \\
	\vdots & \vdots & \ddots & \vdots \\
	A_{N1} & A_{N2} & \cdots & A_{NN}
\end{bmatrix}, 
$$
where each block $A_{ij}$ is $n_i\times n_j$.  
\item Every diagonal block $A_{ii}$ takes the following form:
\begin{equation}\label{eq:defAii}
A_{ii}= 
\begin{bmatrix}
	0 & 0 & \cdots & 0 & a_{i, 1n_i} \\
	a_{i,21} & 0 & \cdots & 0 & 0 \\
	0 & a_{i,32} & \cdots & 0 & 0 \\
	\vdots & \vdots & \ddots & \vdots & \vdots \\
	0 & 0 & \cdots & a_{i,n_in_i - 1} & 0
\end{bmatrix}. 
\end{equation}
\item Partition the vector $b = [b_1;\ldots; b_N]$, where each $b_i$ is $n_i$-dimensional. 
For each $i = 1,\ldots, N$, there is at most one nonzero block or vector among $\{A_{i1}, \ldots, A_{i,i-1}, b_i\}$. Moreover, the nonzero block or vector has only one nonzero entry.  
\end{enumerate}
\end{lemma}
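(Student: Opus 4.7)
The plan is to translate each of the three structural claims into a statement about edges of $K$ via the matrix-induced digraph correspondence, and then derive them directly from the arborescence/singleton condition (B1), the cycle condition (B2), and the labelling conventions \eqref{eq:vertexset}--\eqref{eq:edgeset} together with the depth ordering \eqref{eq:monotoneinc}. The core fact being exploited is that since $(A,b)\in\mathbb{V}(K)$, the entry in position $(i,j)$ of $A$ (resp.\ in position $i$ of $b$) is allowed to be non-identically-zero only when the corresponding arc $\alpha_j\alpha_i$ (resp.\ $\beta\alpha_i$) lies in $K$; so every claim reduces to counting arcs of $K$.

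First I would establish the lower block triangular form. Assume toward contradiction that $A_{ij}$ has a nonzero entry for some $j > i$. Then $K$ contains an arc from some state-node in $V_j$ to some state-node in $V_i$, which in the skeleton $S$ yields an arc $w_j w_i$. Because $S$ is an arborescence rooted at $w_0$, every arc strictly increases depth, forcing $\dep(w_i) = \dep(w_j)+1 > \dep(w_j)$. But \eqref{eq:monotoneinc} forces $\dep(w_i)\le\dep(w_j)$ whenever $i<j$, a contradiction. Next, for the shape of the diagonal block, I would simply read off \eqref{eq:vertexset}--\eqref{eq:edgeset}: the block $A_{ii}$ encodes precisely the arcs internal to $K_i$, so in the local indices $1,\dots,n_i$ the only entries allowed to be nonzero are the subdiagonal positions $(k+1,k)$ for $k=1,\dots,n_i-1$ (from the arcs $\alpha_{s_{i-1}+k}\alpha_{s_{i-1}+k+1}$) and the $(1,n_i)$ corner (from the closing arc $\alpha_{s_{i-1}+n_i}\alpha_{s_{i-1}+1}$); this is exactly \eqref{eq:defAii}.

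Finally, for the off-diagonal/input claim, I would use that $S$ is an arborescence: every $w_i$ with $i\ge 1$ has exactly one in-neighbour $w_j$, and by part 1 together with \eqref{eq:monotoneinc} this parent satisfies $j<i$. By (B1) the arc set $[w_jw_i]$ is a singleton, contributing exactly one $\star$-entry. If $j=0$, that arc is of the form $\beta\alpha_k$ for a unique $\alpha_k\in V_i$, so only $b_i$ is allowed a single nonzero entry and all $A_{ik}$ with $k<i$ vanish; if $j\ge 1$, the unique arc sits inside $A_{ij}$ at a single entry, and $b_i=0$ together with $A_{ik}=0$ for every $k\in\{1,\dots,i-1\}\setminus\{j\}$. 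Either way the conclusion holds. I do not anticipate any real obstacle here; the work is essentially notational bookkeeping between arcs of $K$, row/column conventions for $A$, and the depth labelling on $S$, and no analytic input is required beyond the definitions and conditions (B1)--(B2).
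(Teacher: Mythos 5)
Your proposal is correct and follows essentially the same route as the paper's own (much terser) proof: lower block triangularity from the depth ordering \eqref{eq:monotoneinc} together with the fact that skeleton arcs strictly increase depth, the diagonal block shape from the cycle labelling \eqref{eq:vertexset}--\eqref{eq:edgeset}, and the third item from the arborescence property and the singleton condition on $[w_iw_j]$ in (B1). The extra bookkeeping you supply (the contradiction argument for item~1 and the case split $j=0$ versus $j\ge 1$ for item~3) is just an expansion of the same ideas.
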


\begin{proof}
	The lower block triangular structure of $A$ follows from~\eqref{eq:monotoneinc} and~\eqref{eq:nodeset}. Specifically, if $w_j w_i$ is an edge, then $\dep(w_i) > \dep(w_j)$. Thus, the increasing sequence~\eqref{eq:monotoneinc} implies that $j < i$ and, hence, the corresponding $A_{ij}$ is below the diagonal. 
	The second item of the lemma follows from the fact that every $K_i$ is a cycle and the way of labeling the nodes in $K_i$ as given in~\eqref{eq:edgeset}. The third item follows from the fact that the skeleton graph $S$ is an arborescence and $[w_iw_j]$ is a singleton for every edge $w_iw_j$ of $S$. 
\end{proof}

\subsection{Sufficiency of Condition-B}\label{ssec:sufficiency}
In this subsection, we establish the following result: 

\begin{proposition}\label{prop:sufficiency}
	Every minimal digraph $K\in \mathcal{K}$ is structurally controllable.   
\end{proposition}
 
To establish Prop.~\ref{prop:sufficiency}, we construct below a uniformly controllable pair $(A, b)\in \V(K)$. 
It takes two steps: We will  start by finding a pair $(A(0), b(0))$ such that the corresponding finite-dimensional system is controllable and, then, extend $(A(0), b(0))$ to a pair of functions $(A, b)$ over the entire interval $\Sigma = [0,1]$ so that $(A, b)$ is uniformly controllable.

With a slight abuse of terminology, we say that a pair $(A(0), b(0))\in \R^{n\times n}\times \R^n$ is {\em compliant} with the digraph~$K$ if each nonzero entry of $A(0)$ and $b(0)$ corresponds to an edge of~$K$.  
We assume that $(A(0), b(0))$ takes the form given in Lemma~\ref{lem:matrixform}. 
Let $A_{ii}(0)$ be the $ii$th block of $A(0)$, and the dimension of $A_{ii}(0)$ is $n_i\times n_i$.   
We have the following fact: 

\begin{lemma}\label{lem:finitesystem}
There exists a pair $(A(0), b(0))\in \R^{n\times n}\times \R^n$, compliant with $K$, such that the following two items hold: 
\begin{enumerate}
\item For any $i = 1,\ldots, N$, the $\star$-entries of $A_{ii}(0)$ are positive. Furthermore, let 
\begin{equation}\label{eq:defmu}
r_{i}(0) := \big ( a_{i,1n_i}(0)a_{i,21}(0)\cdots a_{i,n_in_i-1}(0) \big )^{\frac{1}{n_i}}.  
\end{equation}
 Then, $0 < r_1(0) < \cdots < r_N(0)$. 
\item The controllability matrix $C(A(0),b(0))$ is nonsingular.   
\end{enumerate}
\end{lemma}

\begin{proof}
By Remark~\ref{rmk:comparison}, condition-A is sufficient for~$K$ to be structural controllable for finite-dimensional linear systems. Thus, there exist controllable pairs $(A(0), b(0))$ compliant with $K$. Moreover, these controllable pairs are open and dense in $\R^{n\times n}\times \R^n$ with respect to the standard Euclidean topology~\cite{lin1974structural}. 
Now, let $(A(0), b(0))$ be chosen such that item~1 of the lemma is satisfied. If $C(A(0),b(0))$ is nonsingular, then the proof is complete. Otherwise, we can perturb $(A(0), b(0))$ to obtain a controllable pair $(A'(0), b'(0))$, arbitrarily close to $(A(0),b(0))$.
By~\eqref{eq:defmu}, each $r_i(0)$ is continuous in the $\star$-entires of $A_{ii}(0)$, so the pair $(A'(0), b'(0))$ will still satisfy item~1 of the lemma as long as the perturbation is sufficiently small.    	
\end{proof}

We will now extend the pair $(A(0), b(0))$ described in Lemma~\ref{lem:finitesystem} to a pair of matrix-valued functions $(A, b)$ that belongs to $\mathbb{V}(K)$. First, we define a positive real number: 
\begin{equation}\label{eq:defkappa}
\kappa:= \frac{1}{2}\min \left \{ \frac{r_{i + 1}(0)}{r_{i}(0)} - 1 \mid i = 1,\ldots, N -1 \right \}.
\end{equation} 
By item~1 of Lemma~\ref{lem:finitesystem}, $\kappa$ is well defined and is positive. 
We next define a linear function $\rho: \Sigma\to \R$ as follows: 
\begin{equation}\label{eq:defrho}
\rho(\sigma): = \kappa \sigma + 1. 
\end{equation}
Because $\kappa$ is positive, $\rho$ is everywhere nonzero and strictly monotonically increasing.  
We then let
\begin{equation}\label{eq:abpair}
	A(\sigma): = \rho(\sigma) A(0) \quad \mbox{and}\quad b(\sigma):= b(0), \quad \forall \sigma\in \Sigma. 
\end{equation} 
Since $(A(0), b(0))$ is compliant with $K$, so is $(A(\sigma),b(\sigma))$ for all $\sigma\in \Sigma$.   
It follows that $(A, b)\in \mathbb{V}(K)$. 
To establish Prop.~\ref{prop:sufficiency}, it now remains to establish the following fact: 

\begin{lemma}\label{lem:sufficientcondition}
	The pair $(A, b)$ defined in~\eqref{eq:abpair}, with $\rho$ given in~\eqref{eq:defrho}, is uniformly controllable. 
\end{lemma}

\begin{proof}
The following condition, adapted from~\cite{helmke2014uniform}, is a sufficient condition for $(A,b)$ to be uniformly controllable:
\begin{enumerate}
	\item For every $\sigma \in \Sigma$, the finite-dimensional linear system $(A(\sigma), b(\sigma))$ is controllable;
	\item For every $\sigma\in \Sigma$, the eigenvalues of $A(\sigma)$ have algebraic multiplicity one; 
	\item Let $\eig(A(\sigma))$ be the set of eigenvalues of $A(\sigma)$. If $\sigma\neq \sigma'$, then $\eig(A(\sigma))\cap \eig(A(\sigma')) = \varnothing$. 
\end{enumerate} 	
To show that the above three items are satisfied for the given $(A, b)$, we need some preliminaries. 

To that end, we extend each $r_i(0)$, for $i = 1,\ldots, N$, defined in~\eqref{eq:defmu} to a scalar function $r_i:\Sigma\to \R$. For each $\sigma\in \Sigma$,  we let  
$r_i(\sigma)$ be defined in the same way as was in~\eqref{eq:defmu}, but with the argument $0$ replaced with~$\sigma$. Because $A(\sigma) = \rho(\sigma) A(0)$ with $\rho(\sigma)$ positive, we have that 
$r_i(\sigma) = \rho(\sigma) r_i(0)$.  
Also, since $\rho$ is everywhere nonzero and strictly monotonically increasing, so is every~$r_i$. 
Further, for any $i = 1,\ldots, N - 1$, we use the fact that $r_i(1) = \rho(1)r_i(0) = (\kappa + 1)r_i(0)$ to obtain that 
$$
r_{i + 1}(0) - r_{i}(1) =  r_i(0) 
\left(
\nicefrac{r_{i + 1}(0)}{r_{i}(0)} - 1 - \kappa\right ) > 0,  
$$
where the inequality follows from the construction of~$\kappa$ given in~\eqref{eq:defkappa}. 
This inequality, combined with the monotonicity of each $r_i$, imply that if $i\neq j$, then 
\begin{equation}\label{eq:disjoint}
r_i(\sigma) \neq r_j(\sigma'), \quad \forall \sigma, \sigma'\in \Sigma,
\end{equation}
i.e., the images of $r_i$ and $r_j$ do not overlap.

With the above preliminaries, we now return to the proof that the three items given at the beginning of the proof are satisfied for the pair $(A, b)$ defined in~\eqref{eq:abpair}.  
\vspace{.1cm}

\noindent
{\em Proof that item 1 is satisfied.} 
Because $A(\sigma)= \rho(\sigma) A(0)$, the two controllability matrices $C(A(\sigma), b(\sigma))$ and $C(A(0), b(0))$ (which are square matrices) satisfy the following relation: 
\begin{equation}\label{eq:diagonalll}
C(A(\sigma), b(\sigma)) =  C(A(0), b(0))\operatorname{diag} [1, \rho(\sigma),\cdots,  \rho^{n-1}(\sigma)].
\end{equation} 
By item 2 of Lemma~\ref{lem:finitesystem}, $C(A(0), b(0))$ is nonsingular. Since $\rho(\sigma)$ is positive for all $\sigma\in \Sigma$, the diagonal matrix next to $C(A(0), b(0))$ in~\eqref{eq:diagonalll} is also nonsingular. We thus conclude that $C(A(\sigma), b(\sigma))$ is nonsingular and, hence, the finite-dimensional linear system $(A(\sigma), b(\sigma))$ is controllable.
\vspace{.1cm}

\noindent
{\em Proof that item 2 is satisfied.}  
First, note that every matrix $A(\sigma)$ is lower block triangular. Thus, the eigenvalues of $A(\sigma)$ are the union of the eigenvalues of the diagonal blocks $A_{ii}(\sigma)$, for $i = 1,\ldots, N$. 
We next note that the sparsity pattern of $A_{ii}(\sigma)$ is given in~\eqref{eq:defAii}. In particular, the characteristic polynomial of $A_{ii}(\sigma)$ can be computed explicitly as follows: 
$$
\det(\lambda I - A_{ii}(\sigma)) = \lambda^{n_i} - r^{n_i}_i(\sigma), 
$$
where $r_i(\sigma) > 0$ is defined earlier in the proof. 
The roots of the above polynomial are given by:  
$$
\eig(A_{ii}(\sigma)) = 
\left \{ 
r_i(\sigma) e^{\frac{\mathrm{i}2\pi k}{n_i}}
\mid k = 0,\ldots, n_i - 1
\right \},
$$   
so the $n_i$ eigenvalues of $A_{ii}(\sigma)$ are pairwise distinct. 
Moreover, the $N$ sets $\eig(A_{ii}(\sigma))$, for $i = 1,\ldots, N$, are pairwise disjoint. This holds because if $\lambda_i \in \eig(A_{ii}(\sigma))$ and $\lambda_j\in A_{jj}(\sigma)$, with $i\neq j$, then 
$$|\lambda_i| = r_i(\sigma) \quad  \mbox{and} \quad |\lambda_j| = r_j(\sigma).$$
By~\eqref{eq:disjoint}, if $i\neq j$, then $r_i(\sigma) \neq r_j(\sigma)$ and, hence, $\lambda_i \neq \lambda_j$. Thus, the matrix $A(\sigma)$ has $n$ distinct eigenvalues for all $\sigma\in \Sigma$, i.e., the eigenvalues of $A(\sigma)$ have algebraic multiplicity one.   
\vspace{.1cm}

\noindent
{\em Proof that item 3 is satisfied.} 
Let $\sigma$ and $\sigma'$ be two distinct points in~$\Sigma$. Without loss of generality, we assume that $\sigma < \sigma'$. Let $\lambda$ and $\lambda'$ be two arbitrary eigenvalues of $A(\sigma)$ and $A(\sigma')$, respectively. We show below that $\lambda \neq \lambda'$. Since $A(\sigma)$ and $A(\sigma')$ are lower block triangular, $\lambda$ and $\lambda'$ are eigenvalues of certain diagonal blocks of $A(\sigma)$ and $A(\sigma')$, respectively.  
Without loss of generality, we assume that $\lambda\in \eig(A_{ii}(\sigma))$ and $\lambda'\in \eig(A_{jj}(\sigma'))$. There are two cases: If $i = j$, then $|\lambda| = r_i(\sigma)$ and $|\lambda'| = r_i(\sigma')$. Since $r_i$ is strictly monotonically increasing, $r_i(\sigma) < r_i(\sigma')$ and, hence, $\lambda \neq \lambda'$. If $i\neq j$, then $|\lambda| = r_i(\sigma)$ and $|\lambda'| = r_j(\sigma')$. By~\eqref{eq:disjoint}, $r_i(\sigma) \neq r_j(\sigma')$. In either case, we have that $\lambda\neq \lambda'$. This completes the proof. 
\end{proof}

We further provide an example that illustrates the procedure for constructing a pair $(A, b)$ compliant with a given $K\in \cal{K}$.

\begin{example}{\em
	Consider the minimal graph $K\in \cal{K}$ in Fig.~\ref{fig:construction}.	
	\begin{figure}[ht]
\begin{center}
\includegraphics[width = 0.45\textwidth]{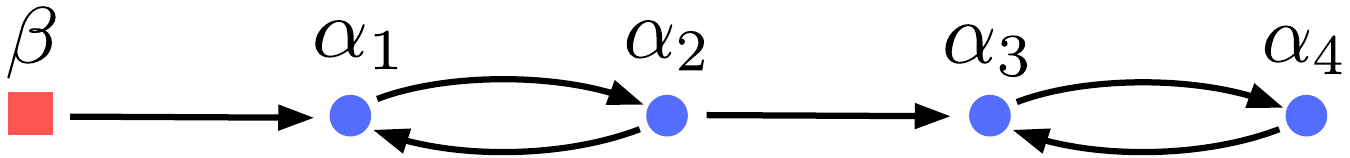}
\caption{\small  A minimal graph $K\in \cal{K}$. The two $2$-cycles in the subgraph~$H$ induced by the state-nodes form a Hamiltonian decomposition of~$H$.  
}\label{fig:construction}
\end{center}
\end{figure}

	To construct a controllable pair $(A, b)$ compliant with $K$, we first specify their values at $\sigma = 0$:
	$$
	A(0) = 
	\begin{bmatrix}
		0 & 1 & 0 & 0 \\
		1 & 0 & 0 & 0 \\
		0 & 1 & 0 & 3 \\
		0 & 0 & 3 & 0 
	\end{bmatrix}
	 \quad \mbox{and} \quad
	b(0) = 
	\begin{bmatrix}
		1 \\
		0 \\
		0 \\
		0
	\end{bmatrix}.
	$$
	Note that $(A(0), b(0))$ satisfies the two items of Lemma~\ref{lem:finitesystem}. Indeed, the controllability matrix computed below:
	$$
C(A(0), b(0)) = 
\begin{bmatrix}
	1 & 0 & 1 & 0\\
	0 & 1 & 0 & 1\\
	0 & 0 & 1 & 0 \\
	0 & 0 & 0 & 3
\end{bmatrix},
$$
is nonsingular. Also, the two scalars $r_1(0) = (a_{12}(0)a_{21}(0))^{\frac{1}{2}}$ and $r_2(0) = (a_{34}(0)a_{43}(0))^{\frac{1}{2}}$ associated with the two $2$-cycles are given by $r_1(0) = 1$ and $r_2(0) = 3$, so $r_1(0) < r_2(0)$.

Correspondingly, the scalar $\kappa$ introduced in~\eqref{eq:defkappa} takes value~$1$, so the linear function $\rho(\sigma)$ is simply given by $\rho(\sigma) = \sigma  + 1$. 
We then follow~\eqref{eq:abpair} to define $
A(\sigma):= (\sigma + 1) A(0)$ and $b(\sigma) := b(0), 
$ for all $\sigma \in \Sigma = [0,1]$. 

By construction, $(A(\sigma),b(\sigma))$ is a controllable pair for all $\sigma\in \Sigma$. The eigenvalues of $A(\sigma)$ are given by $\pm (1 + \sigma)$ and $\pm 3(1 + \sigma)$, all of which have algebraic multiplicity one. Furthermore, for any two distinct $\sigma$ and $\sigma'$ in $[0,1]$, $\eig(A(\sigma))$ does not intersect with $\eig(A(\sigma'))$. 
Thus, the three items given at the beginning of the proof of Lemma~\ref{lem:sufficientcondition} are satisfied, and we conclude that the pair $(A, b)$ is uniformly controllable.      
}
\end{example}

We now combine results established in this section and prove the two theorems formulated in Section~\ref{sec:formulationandresult}:  

\begin{proof}[Proof of Theorems~\ref{thm:main1} and~\ref{thm:minimal}]
The necessity of condition-A for structural controllability is established in Prop.~\ref{prop:necessary}. The proof of sufficiency relies on the use of the monotonicity property (Lemma~\ref{lem:monotone}): We have shown in Prop.~\ref{prop:redminimal} that condition-B is minimal with respect to condition-A  and, then, in Prop.~\ref{prop:sufficiency} that condition-B itself is sufficient for structural controllability.  
\end{proof}

\section{Conclusions and Outlooks}\label{sec:conclusions}
We have introduced and solved the structural controllability problem for linear ensemble systems over (finite unions of) closed intervals in $\R$. A necessary and sufficient condition is provided in Theorem~\ref{thm:main1} for a sparsity pattern to be structural controllable. The minimal sparsity patterns are further characterized in Theorem~\ref{thm:minimal}. 

Recall that in the definition of structural controllability (Def.~\ref{def:StrucContr}), we only need $(A, B)$ to be continuous. The condition can be made stronger by requiring that $(A, B)$ be $k$th continuously differentiable, for $k = 0,\ldots, \infty$, or even real-analytic, i.e., $k = \omega$. But, changing the condition does not affect the results. This holds because the $(A, b)$ pair constructed in Subsection~\ref{ssec:sufficiency} is, in fact, linear in $\sigma$. 

A relevant question we will aim to investigate in the future is formulated below: 
Let $G\in \mathcal{G}_{n,m}$ be structurally controllable digraph. 
For two nonnegative integers $k$ and $\ell$, we let 
$$
\V^{k,\ell}(G):= \big ( \mathrm{C}^k(\Sigma, \R^{n\times n}) \times \mathrm{C}^\ell(\Sigma, \R^{n\times m}) \big ) \cap \V(G),
$$
i.e., a pair $(A, B)$ belongs to $\V^{k,\ell}(G)$ if and only if $A$  (resp. $B$) is $k$th (resp. $\ell$th) continuously differentiable and $(A, B)$ is compliant with $G$. We endow the space $\mathrm{C}^k(\Sigma, \R^{n\times n})$ (resp. $\mathrm{C}^\ell(\Sigma, \R^{n\times m})$) with, e.g., the Whitney $\mathrm{C}^k$- (resp. $\mathrm{C}^\ell$-) topology. The subspace $\V^{k,\ell}(G)$ is then endowed with the subspace topology. Define a subset of $\V^{k,\ell}(G)$ as follows:
$$
\V^{k,\ell}_*(G):=\big \{(A, B)\in \V^{k,\ell}(G) \mid  
(A, B) \mbox{ is uniformly controllable}
\big \}.
$$
The subset $\V_*^{k,\ell}(G)$ can hardly be dense in $\V^{k,\ell}(G)$. We are interested in its openness.  More precisely, we ask when does $\V_*^{k,\ell}(G)$ contain open sets in $\V^{k,\ell}(G)$?  
On one hand, we conjecture that if the topology on the space of $A$-matrices is too coarse (e.g., the Whitney $\mathrm{C}^0$-topology), then $\V^{0,\ell}_*(G)$ does {\em not} contain an open set. The conjecture is based on the analysis carried out in Section~\ref{ssec:sufficiency}; specifically, the monotonicity of each (continuous) branch of eigenvalues of $A$ is crucial to uniform controllability of $(A,b)$. 
However, a perturbation of $A$, in the $\mathrm{C}^0$-sense, can easily violate such a property. 
On the other hand, we conjecture that if $k\geq 1$, then
 $\V_*^{k,\ell}(G)$ contains open sets in $\V^{k,\ell}(G)$ (the space of $A$-matrices is endowed with the Whitney $\mathrm{C}^k$-topology). 

Other potential extensions of the current work include characterizations of structurally controllable digraphs for linear ensemble systems whose parameterization spaces are circles and, further, for nonlinear ensemble systems (a prototype of sparse bilinear ensemble system is investigated in~\cite{chen2019controllability}).

\bibliographystyle{IEEEtran}    
\bibliography{struc.bib}

\section*{Appendix}
We provide here extensions of Theorem~\ref{thm:main1} by addressing 
 $\mathrm{L}^p$-controllability, for $1\leq p < \infty$. We will first review a few preliminary results, next introduce the notion of structural $\mathrm{L}^p$-controllability and, then, show that the same condition, condition-A given in Theorem~\ref{thm:main1}, is still necessary and sufficient for a digraph $G\in \mathcal{G}$ to be structurally $\mathrm{L}^p$-controllable.  

Let $\Sigma = [0,1]$ be the parameterization space. We represent a linear ensemble system over $\Sigma$ by a matrix pair $(A, B)$, with $A\in \mathrm{C}^0(\Sigma, \R^{n\times n})$ and  $B\in \mathrm{L}^p(\Sigma, \R^{n\times m})$, i.e., each entry of $B$ has finite $\mathrm{L}^p$-norm. Correspondingly, profiles $x_\Sigma(t)$ of the linear ensemble system are now elements in $\mathrm{L}^p(\Sigma, \R^n)$. 

The pair $(A, B)$ is said to be $\mathrm{L}^p$-controllable if for any initial profile $x_\Sigma(0)$, any target profile $\hat x_\Sigma$, and any error tolerance $\epsilon > 0$, there exist a time $T > 0$ and an integrable control input $u:[0,T]\to \R^m$ such that the solution $x_\Sigma(t)$  generated by the ensemble system $(A, B)$ satisfies $\|x_\Sigma(T) - \hat x_\Sigma\|_{\rm L^p} < \epsilon$. 

Similarly, one defines the $\mathrm{L}^p$-controllable subspace, denoted by $\mathcal{L}^p(A, B)$, as the $\mathrm{L}^p$-closure of the vector space spanned by the columns of $A^kB$, for $k\ge 0$. 
A counterpart of Lemma~\ref{lem:controllablesubspace} is given below (which is also adapted from~\cite{triggiani1975controllability}): 

\begin{lemma}\label{lem:controllablesubspace2}
A pair $(A, B)\in \mathrm{C}^0(\Sigma, \R^{n\times n})\times \mathrm{L}^p(\Sigma, \R^{n\times m})$ is $\mathrm{L}^p$-controllable if and only if $\mathcal{L}^p(A, B) = \mathrm{L}^p(\Sigma, \R^n)$. 
\end{lemma}

We further note the following fact as a counterpart of Lemma~\ref{lem:subensemble} (a proof can be found in~\cite{chen2020controllability}):

\begin{lemma}\label{lem:subensemble1}
Let $(A, B)\in \mathrm{C}^0(\Sigma, \R^{n\times n})\times \mathrm{L}^p(\Sigma, \R^{n\times m})$.  
	For a closed sub-interval $\Sigma'$ of $\Sigma$, let $A'$ and $B'$ be obtained by restricting $A$ and $B$ to $\Sigma'$. If $(A', B')$ is not $\mathrm{L}^p$-controllable, then neither is $(A, B)$.  
\end{lemma}

Given a digraph $G\in \mathcal{G}_{n,m}$,  
let $\V^p(G)$ be the set of matrix pairs $(A, B)\in \mathrm{C}^0(\Sigma, \R^{n\times n})\times \mathrm{L}^p(\Sigma, \R^{n\times m})$ compliant with~$G$. The following definition is a variation of Def.~\ref{def:StrucContr}: 

\begin{definition}
	A digraph $G\in \mathcal{G}$ is said to be \textbf{structurally $\mathrm{L}^p$-controllable} if there is an $\mathrm{L}^p$-controllable pair in $\mathbb{V}^p(G)$.  
\end{definition}

With the definition above, we have the following result: 

\begin{theorem}\label{thm:structurallpcontrollability}
	A digraph $G\in \mathcal{G}$ is structurally $\mathrm{L}^p$-controllable if and only if condition-A given in Theorem~\ref{thm:main1} is satisfied. 
\end{theorem}

\begin{proof}
The proof for sufficiency of condition-A is straightforward. If $G$ satisfies condition-A, then, by Theorem~\ref{thm:main1}, there is a continuous matrix pair $(A, B)$ such that $(A, B)$ is uniformly controllable and compliant with~$G$. By Lemma~\ref{lem:controllablesubspace}, $\mathcal{L}(A, B) = \mathrm{C}^0(\Sigma, \R^n)$. Since $\mathcal{L}(A, B) \subseteq \mathcal{L}^p(A, B)$ and since the $\mathrm{L}^p$-closure of $\mathrm{C}^0(\Sigma, \R^n)$ is $\mathrm{L}^p(\Sigma, \R^n)$, by Lemma~\ref{lem:controllablesubspace2}, $(A, B)$ is also $\mathrm{L}^p$-controllable.     

The necessity of condition-A is also not too hard to establish. The arguments for necessity of accessibility of $G$ to control-nodes are the same as the ones in the proof of Prop.~\ref{prop:necessary}. The proof of this part is thus omitted. 

It remains to show that the subgraph $H$ of $G$ induced by the state-nodes has to admit a Hamiltonian decomposition in order for $G$ to be structurally $\mathrm{L}^p$-controllable. 
Recall from the arguments in the proof of Prop.~\ref{prop:necessary} that if the subgraph $H$ does not admit any Hamiltonian decomposition, then for any $(A, B)\in \V^p(G)$, $\det A \equiv 0$. Thus, it suffices to show that any matrix pair $(A, B)$, with $\det A \equiv 0$, is not $\mathrm{L}^p$-controllable.

The proof will be carried out by induction on~$n$. For the base case where $n = 1$, $\det (A) \equiv 0$ implies that $A \equiv 0$. Because $\mathcal{L}^p(0,B)$ is the column space of $B$ which is finite dimensional, it follows from  Lemma~\ref{lem:controllablesubspace2} that $(0,B)$ is not $\mathrm{L}^p$-controllable.   

For the inductive step, we assume that the result holds for all $k \leq (n-1)$ and we prove for~$n$. 
Since $\operatorname{rank}A(\sigma)$ takes value from the finite set $\{0,\ldots, n\}$, there is a $\sigma^*\in \Sigma$ such that $\operatorname{rank}A(\sigma^*)$ achieves the maximal value, denoted by $k$, over $\Sigma$. 
Since $\det A \equiv 0$, $k$ is strictly less than $n$.  
 
Because $A$ is continuous in $\sigma$ and $\operatorname{rank} A$ is locally nondecreasing in $\sigma$, there is a closed interval $\Sigma':= [\sigma'_-, \sigma'_+]$ in $\Sigma$, with $\sigma'_-< \sigma^* <\sigma'_+$, such that $ \operatorname{rank} A(\sigma) = k$ for all $\sigma\in \Sigma'$. 
It is known~\cite{grasse2004vector} that 
there exists a continuous function $P: \Sigma'\to \GL(n,\R)$ such that 
$A':=PAP^{-1} = [0, A'_{12};0, A'_{22}]$, 
where $A'_{22}$ is $k\times k$. We next let $B':= P B$ and partition $B' = [B'_1; B'_2]$, where $B'_2$ is $k\times m$.  

Next, consider the linear ensemble system given by the $(A',B')$ pair constructed above: 
\begin{equation}\label{eq:secondstep}
	\begin{bmatrix}
		\dot x_1(t,\sigma) \\
		\dot x_2(t,\sigma)
	\end{bmatrix} = 
	\begin{bmatrix}
 0 & A'_{12}(\sigma) \\
 0 & A'_{22}(\sigma) 	
\end{bmatrix}
\begin{bmatrix}
		x_1(t,\sigma) \\
		x_2(t,\sigma)
	\end{bmatrix}  +  
	\begin{bmatrix} 
		B'_1(\sigma) \\
		B'_2(\sigma) 
	\end{bmatrix} u(t), \quad\forall \sigma\in \Sigma'.
\end{equation}
By construction, the above system is obtained by first restricting the $(A, B)$ pair to $\Sigma'$ and, then, applying a similarity transformation via~$P$.  
Note that similarity transformation preserves $\mathrm{L}^p$-controllability. 
Thus, by Lemma~\ref{lem:subensemble1}, to show that $(A, B)$ is not $\mathrm{L}^p$-controllable, it suffices to show that $(A',B')$ is not.  We now consider two cases:
\vspace{.1cm}

\noindent
{\em Case 1: $\det A'_{22}\equiv 0$.} Note that the dynamics of $x_2(t,\sigma)$ in~\eqref{eq:secondstep} do not depend on the dynamics of $x_1(t,\sigma)$: 
\begin{equation}\label{eq:substate}
\dot x_2(t,\sigma) = A'_{22}(\sigma) x_2(t,\sigma) + B'_2(\sigma) u(t), \quad \forall \sigma\in \Sigma'. 
\end{equation}  
It should be clear that if system~\eqref{eq:substate} is not $\mathrm{L}^p$-controllable, then neither is~\eqref{eq:secondstep}. Since $\det A'_{22}\equiv 0$ and $A'_{22}$ is $k\times k$ with $k < n$, we apply the induction hypothesis to conclude that system~\eqref{eq:substate} is not $\mathrm{L}^p$-controllable. The proof is then done.  
\vspace{.1cm}

\noindent
{\em Case 2: $\det A'_{22}\not\equiv 0$.} 
Because $A'_{22}$ is continuous, there is a closed interval $\Sigma'' := [\sigma''_-, \sigma''_+]$ in $\Sigma'$, with $\sigma''_- < \sigma''_+$, such that $\det A'_{22}(\sigma)\neq 0$ for all $\sigma\in \Sigma''$. It follows that $A'_{22}$ is invertible when restricted to $\Sigma''$. 
Let $P':\Sigma'' \to \GL(n,\R)$ be defined as follows: 
$$P':= 
\begin{bmatrix}
I_{n - k} & -A'_{12}A_{22}'^{-1} \\ 
0 & I_{k}
\end{bmatrix}.$$ 
The inverse $P'^{-1}$ is simply given by 
$$P'^{-1} = 
\begin{bmatrix}
I_{n - k} & A'_{12}A_{22}'^{-1} \\ 
0 & I_{k}
\end{bmatrix}.
$$ 
Next, define $A'':= P' A' P'^{-1}$ and $B'':= P' B'$. By computation, we have that 
$A'' = [0, 0; 0, A'_{22}]$. 
Correspondingly, we partition $B'' = [B''_1; B''_2]$, where $B''_2$ is $k\times m$.  

Consider the linear ensemble system given by the $(A'',B'')$ pair,   
which is obtained by first restricting $(A',B')$ from $\Sigma'$ to $\Sigma''$ and, then, applying a similarity transformation via $P'$. 
Thus, to show that $(A',B')$ is not $\mathrm{L}^p$-controllable, we only need to show that $(A'',B'')$ is not.  
For any $f\in \mathcal{L}^p(A'',B'')$, we decompose $f = [f_1; f_2]$ with $f_1$ of dimension $(n - k)$. Then, by the structure of matrix $A''$, we have that $f_1$ belongs to the column space of $B''_1$, which is finite dimensional. It follows that $\mathcal{L}^p(A'',B'')\neq \mathrm{L}^p(\Sigma'', \R^n)$ and, hence, by Lemma~\ref{lem:controllablesubspace2}, $(A'',B'')$ is not $\mathrm{L}^p$-controllable. This completes the proof.   	
\end{proof}

\end{document}